\documentclass[runningheads]{llncs}
\usepackage[table]{xcolor}% http://ctan.org/pkg/xcolor
\usepackage{tableaucolors}
\usepackage{multirow}
\usepackage{caption}
\usepackage{outlines}
\usepackage{subcaption}
\captionsetup[subfigure]{labelformat=empty} 
\usepackage{tabularx}
\usepackage{array}
\usepackage[figuresright]{rotating}
\newcolumntype{Y}{>{\centering\arraybackslash}X}
\usepackage{booktabs}
\usepackage{lipsum}
%\usepackage{dirtree}
%\usepackage{forest}
%\usepackage{tikz}
%\usetikzlibrary{trees}
%\usepackage{graphicx}
%\usepackage{adjustbox}
%\forestset{qtree/.style={for tree={parent anchor=south,  child anchor=north,align=center,inner sep=0pt}}}
%\usepackage{flexisym}
\usepackage[edges]{forest}
\usetikzlibrary{arrows.meta,shadows.blur}
\usepackage{graphicx}      % for \scalebox
\usepackage[normalem]{ulem}
\forestset{%
  colour me out/.style={
    draw=darkgray,
    rounded corners
  },
  rect/.append style={},
  dir tree switch/.style args={at #1}{%
    for tree={
      edge=-Latex,
      font=\ttfamily,
      fit=rectangle,
    },
    where level=#1{
      for tree={
        folder,
        grow'=0,
                % Set smaller spacing here:
        s sep=1pt,  % vertical sibling spacing
        l sep=12pt,  % horizontal parent-child spacing
      },
      delay={child anchor=north},
    }{},
    before typesetting nodes={
      for tree={
        content/.wrap value={\strut ##1},
      },
      if={isodd(n_children("!r"))}{
        for nodewalk/.wrap pgfmath arg={{fake=r,n=##1}{calign with current edge}}{int((n_children("!r")+1)/2)},
      }{},
    },
  },
}

\usepackage[T1]{fontenc}
\usepackage{listings}
\usepackage{xcolor}
\usepackage{amsmath}
\usepackage{amsfonts}
\usepackage{comment}
\usepackage{tabularx}
\usepackage{multirow}
\usepackage{soul}
\usepackage[normalem]{ulem}
\usepackage[colorinlistoftodos,prependcaption]{todonotes}

\usepackage{amssymb}
\usepackage{enumitem}
\usepackage{multirow}

%\usepackage{nath}
%\delimgrowth=1

%\usepackage[english]{babel}
\usepackage{graphicx}

\newcommand{\revwayne}[1]{\textcolor{black}{#1}}

\usepackage{hyperref}
\usepackage[most]{tcolorbox}
\newtcolorbox[auto counter, number within=section]{mybox}[2][]{colback=C7!10, colframe=C7, title={\textbf{#2}}, fonttitle=\scriptsize, title style={right}, top=0.5pt, bottom=0.5pt,boxsep=0.5pt, left=1pt, right=1pt, before=,after=, #1}
\newtcolorbox[auto counter, number within=section]{myboxNoTitle}[1][]{%
colback=C7!10,
colframe=C7,
top=0.5pt,
bottom=0.5pt,
boxsep=0.5pt,
left=1pt,
right=1pt,
before=,
after=,
#1
}
\lstdefinestyle{mystyle}{
    basicstyle=\scriptsize\ttfamily,
    %numbers=left,
    %numberstyle=\tiny\color{gray},
    stepnumber=1,
    numbersep=10pt,
    showstringspaces=false,
    breaklines=true,
    frame=single,
    rulecolor=\color{black},
    keywordstyle=\color{C0},
    commentstyle=\color{C7},
    stringstyle=\color{C3},
morekeywords={@prefix, a, sh:NodeShape, sh:targetClass, sh:property, sh:path, sh:qualifiedValueShape, sh:qualifiedMinCount, sh:class},
    literate={sh:}{{\textcolor{C2}{sh:}}}3
             {brick:}{{\textcolor{C1}{brick:}}}5
             {ex:}{{\textcolor{C5}{ex:}}}3, 
    lineskip=-0.5ex, % Adjust the line spacing here
    aboveskip=1pt, 
    belowskip=1pt,
}

\lstset{style=mystyle}
\makeatletter
\AtBeginDocument{\let\c@lstlisting\c@figure}
\makeatother
\begin{document}
%\addtolength{\abovedisplayskip}{-1ex} % Reduce space above equations
%\addtolength{\belowdisplayskip}{-1ex} % Reduce space below equations

%
\title{%Systematic Evaluation of LLM-Enabled Knowledge Graph Repair
%Systematic Repair of Knowledge Graphs Using Large Language Models
% Systematic LLM-Enabled Repair of Knowledge Graphs
%A Systematic Evaluation Framework for Knowledge Graph Repair
Systematic Evaluation of Knowledge Graph Repair with Large Language Models
}
%
%\titlerunning{Abbreviated paper title}
% If the paper title is too long for the running head, you can set
% an abbreviated paper title here
%
\author{Tung-Wei Lin\inst{1,2}\and Gabe Fierro\inst{3,4}\and Han Li\inst{2}\and Tianzhen Hong\inst{2}\and Pierluigi Nuzzo\inst{1}\and Alberto Sangiovanni-Vinentelli\inst{1}}
\authorrunning{T.W. Lin et al.}
% First names are abbreviated in the running head.
% If there are more than two authors, 'et al.' is used.
%
\institute{UC Berkeley, USA \\\email{\{twlin, nuzzo,  alberto\}@eecs.berkeley.edu, }\and
Lawrence Berkeley National Laboratory, USA\\
\email{\{hanli, THong\}@lbl.gov}\and
National Renewable Energy Laboratory, USA\and
Colorado School of Mines, USA\\
\email{gtfierro@mines.edu}}%
\maketitle              % typeset the header of the contribution
\begin{abstract}
We present a systematic approach for evaluating the quality of knowledge graph repairs with respect to constraint violations defined in shapes constraint language (SHACL). Current evaluation methods rely on \emph{ad hoc} datasets, which limits the rigorous analysis of repair systems in more general settings. Our method addresses this gap by systematically generating violations using a novel mechanism, termed violation-inducing operations (VIOs). We use the proposed evaluation framework to assess a range of repair systems which we build using large language models. We analyze the performance of these systems across different prompting strategies. Results indicate that concise prompts containing both the relevant violated SHACL constraints and key contextual information from the knowledge graph yield the best performance.
%effective prompts balance contextual information in the shapes and knowledge graphs while minimizing prompt size.
%\todo[]{how does the new abstract sound?}
% show that focusing on the source shape of a violation and its dependencies in prompts is crucial, whereas limiting prompts to only the focus node’s local context degrades performance. Adding a single positive example maintains effectiveness and enhances scalability.
%We propose a systematic framework for evaluating Shapes Constraint Language (SHACL) violation repairs. While SHACL specifications ensure data quality through constraints, automating repairs remains challenging due to the multiple ways a given violation can be resolved. Our framework fills in the gap in the literature of a generates SHACL violations (VIOs) enabling a systematic analysis. Empirical results show that, for a violation, focusing on the source shape and its dependency is crucial, whereas limiting the prompt to only the focus node’s local context degrades performance. Incorporating a single positive example helps maintain performance and improves scalability. Among four commercial LLMs tested, three yield comparable results. 
%This work underscores the need for an evaluation framework to systematically assess and advance automated SHACL repairs for creating a semantic data model.
\keywords{Shapes Constraint Language  \and Knowledge Graph Repair \and Large Language Models \and Brick Models}
\end{abstract}

\section{Introduction}

Knowledge graphs (KGs) are structured representations of information that encode directed relationships between labeled entities. They excel at representing domain-specific knowledge and serve as machine-readable reference databases for diverse applications, including smart building management~\cite{balaji2016brick}, genome clustering~\cite{yuDelineationMetabolicGene2016}, and general knowledge bases such as Wikipedia~\cite{vrandecicWikidataFreeCollaborative2014}. More recently, KGs have also been used for automatic software configuration and remote data access~\cite{fierro2020mortar,rodriguez-muroOntologyBasedDataAccess2013}.

KGs are populated using a variety of techniques, including manual curation~\cite{bollacker2008freebase}, automated translation from existing databases, and extraction from unstructured text~\cite{bi2024codekgc}. Users rely on the correctness and completeness of KGs to support downstream tasks. Machine learning models, including language models, extract facts from KGs for tasks such as query generation~\cite{farzana2023knowledge} and question answering~\cite{xu2024retrieval}. Large KGs like Wikidata~\cite{vrandecicWikidataFreeCollaborative2014} and YAGO~\cite{suchanekYAGO45Large2024} facilitate search by linking real-world entities to their attributes. Cyber-physical KGs, built with ontologies such as Brick~\cite{balaji2016brick} and RealEstateCore~\cite{hammarRealEstateCoreOntology}, enable digital twins and automatic configuration of fault detection and control. Reference ontologies like QUDT~\cite{qudt} document physical constants and engineering units. As structured, factual, and symbolic repositories, KGs require rigorous validation to ensure their reliability.

The Shapes Constraint Language (SHACL)~\cite{pareti2021review}, recently standardized by W3C~\cite{W3C}, enables the specification and validation of Resource Description Framework (RDF) KGs~\cite{RDF} against sets of constraints called \emph{shapes}. The \emph{validation report} obtained after validating a KG against SHACL shapes details which constraints were violated on which KG nodes but provides limited feedback on how to repair these violations.
Existing repair methods often combine manual intervention with domain-specific heuristics~\cite{wright2020schimatos,fierro2022application,fierro2020mortar,ahmetaj2022repairing}, or depend on access to KG editing history~\cite{pellissier2019learning,pellissier2021neural,fan2019deducing}. The absence of comprehensive benchmarks further complicates the evaluation and advancement of automatic KG repair techniques.
%Several approaches propose to automate the repair process including answer set programming~\cite{ahmetaj2022repairing} and domain-specific heuristics~\cite{fierro2022application}.
%These repair techniques leverage the \emph{symbolic} nature of knowledge graphs and SHACL rules.
%They can only suggest repairs based on the identity of symbols (nodes and edges) and their appearance in constraints and cannot incorporate lexical, common-sense or other external information for suggesting repairs.
%The choice of ``best'' repair relies on combinations of expert-derived heuristics or using minimal edits that make the fewest changes to the knowledge graph.

This paper introduces a systematic framework to evaluate KG repair systems using SHACL validation reports. Existing evaluation
methods are limited, since they either focus on datasets with editing history, such as Wikidata~\cite{vrandecicWikidataFreeCollaborative2014}, or small, manually curated benchmarks like the SHACL test suite~\cite{w3c_data_shapes_test_suite}. The \emph{ad hoc} nature of these datasets datasets limits fine-grained analysis of repair system performance.
%Evaluating our KG repair systems on prior 
% Past knowledge graph repair 
%benchmarks may not be effective, since these benchmarks are mostly focused on 
% the incumbent issues of 
%existing datasets like Wikidata~\cite{vrandecicWikidataFreeCollaborative2014} or the SHACL test suite~\cite{w3c_data_shapes_test_suite}. 
% however, the ad-hoc nature of the repairs in these datasets does not allow us to attribute the successes or failures of the repair method to the nature of the required repairs.
Instead, we look for accurate
% want detailed and specific 
characterizations of KG repair systems that can also help understand how these systems behave on new and unseen  graphs and constraints. We propose a novel evaluation method that systematically generates SHACL violations through \emph{violation-inducing operations} (VIO), for which  correct fixes are known. We can then evaluate repair methods across many different types of violations and repairs on arbitrary KGs.

Using this framework, we investigate the use of \emph{large language models} (LLMs) for KG repair. We argue that LLMs offer three key advantages. 
First, LLMs embed domain knowledge, potentially enabling better heuristics for repair suggestions. Techniques like retrieval-augmented generation~\cite{zhaoRetrievalAugmentedGeneration2024} can also incorporate external facts to support repairs.
Second, LLMs have demonstrated success in pattern matching~\cite{mirchandani2023large}, suggesting they can synthesize repairs from complex SHACL constraints without the extensive programming required by existing techniques.
Third, the multi-step problem-solving capabilities of agentic LLM systems~\cite{yao2023react} may enable more sophisticated repairs than prior approaches.
Our contributions can be summarized as follows:
\begin{itemize}[noitemsep, topsep=0pt, left=0pt, align=left]
    \item We introduce and formalize \emph{violation-inducing operations}, which systematically enumerate all violations on a KG with respect to a set of SHACL shapes.%  as ``expansion trees'' over knowledge graphs. VIOs can enumerate all the potential closed-world errors on a knowledge graph with respect to a set of SHACL shapes. 
    \item We propose a VIO-based method for generating datasets of SHACL violations. The generation method provides a high degree of control over the size and shape, or constraint, coverage for evaluating repair systems.% We also propose a 4-part metric for evaluating KG repairs.
    \item We present a generic LLM-based KG repair method and evaluate it on three real-world KGs and four commercial and open-source LLMs.
\end{itemize}

The proposed framework enables the systematic evaluation of KG repair systems by generating datasets with high degree of control over which violations occur and which repairs are necessary to fix the violations. We can then  characterize KG repair systems based on the kinds of repairs they can make and how accurate they are. We can also vary the size of generated datasets to measure the scaling performance of a repair system for metrics like cost and computation time. We demonstrate the use of the framework to analyze the performance of LLM-based KG repair systems by prompting strategy and choice of model.

\section{Background and Related Work}
\label{sec:related_work}

KGs are essential for a wide range of applications, but they are expensive to create and maintain. They can be developed through expert-driven manual design~\cite{yuDelineationMetabolicGene2016,balaji2016brick}, data mining and representation learning techniques~\cite{melnyk2022knowledge,bi2024codekgc}, automated translation from existing sources~\cite{fierroShepherdingMetadataBuilding2020,ryenBuildingSemanticKnowledge2022}, or a combination of these methods. Regardless of how they are constructed, KGs require post-processing and verification to ensure their correctness and completeness~\cite{mihindukulasooriya2023text2kgbench}. These qualities are typically enforced through compliance checking against domain-specific constraints, which define permissible KG statements. A \emph{validation process} detects violations of these constraints. In this paper, we focus on methods that repair such violations to restore compliance. In contrast, KG completion aims to predict missing knowledge rather than address constraint violations~\cite{malaviya2020commonsense,wang2022simkgc,shen2022comprehensive}.% \revwayne{Due to the absence of specifications, we do not focus on KG completion in this work.}

\subsection{Shapes Constraint Language (SHACL)}

SHACL~\cite{pareti2021review} is a formal specification language to ensure data quality in KGs. A SHACL validation engine such as pySHACL~\cite{pyshacl}  checks whether a KG conforms to a set of specifications, called \textit{manifest}. After validation, the engine produces a report that identifies constraint violations and can be used to guide subsequent repairs. However, these reports typically provide only minimal information, leaving users to determine how best to address the violations. This presents a major challenge, as there are many ways to repair a violation. 

Similar to database repair \cite{staworko2012prioritized}, one approach is to simply remove inconsistent data, but this risks discarding valuable information. Other methods~\cite{ahmetaj2022repairing,arenas1999consistent} compute repairs with minimal cardinality to preserve as much information as possible. However, minimal cardinality edits, or other symbolic heuristic methods like BuildingMOTIF~\cite{fierro2022application}, are not always sufficient to determine the best repair.
A significant reason is that these methods cannot consider information beyond what is representable as symbols in the graph; lexical information and domain information are not available to guide the repair process.

\subsection{KG Repair Systems}

A KG repair process must enumerate a set of possible repairs given a set of violated constraints.
These constraints can be considered individually~\cite{wright2020schimatos,arnaout2022utilizing} or in groups~\cite{ahmetaj2022repairing,fierro2022application,fan2019deducing}.
Methods for generating KG repairs vary in the degree of human or expert knowledge they incorporate, in addition to learning from past repairs or incorporating external reference knowledge.
Schimatos~\cite{wright2020schimatos} implements a user interface for manual repair of the KG, and
Pellissier et al.~\cite{pellissier2019learning} and Fan et al.~\cite{fan2019deducing} learn repairs from logs of past repairs, which include both automated and manual repairs.
BuildingMOTIF~\cite{fierro2022application} uses expert-derived heuristics to suggest repairs based on domain knowledge in the smart building domain, while Ahmetaj et al.~\cite{ahmetaj2022repairing} generate sets of possible repairs using answer set programming.

After generating a set of possible repairs, a process must choose which repairs to apply to the KG.
Methods based on direct or historical manual input~\cite{wright2020schimatos,arnaout2022utilizing} have the benefit of human knowledge or user confirmation, which are deemed to lead to high-quality repairs.
Automated techniques~\cite{ahmetaj2022repairing,fierro2020mortar,arnaout2022utilizing,pellissier2021neural} often rely on heuristics or quantifiable metrics to choose the ``best'' repair, e.g., by choosing the minimal cardinality fix~\cite{ahmetaj2022repairing} or custom heuristics to rank fixes based on the amount of information they add to the KG~\cite{arnaout2022utilizing,fierro2022application}.

KG repair methods also differ in the kinds of repairs they can generate, especially with respect to whether a repair can include information that is already present in the KG but not in the validation report. Ahmetaj et al.~\cite{ahmetaj2022repairing} generate fresh values to repair violations of existential constraints that complain about missing values. Others~\cite{arnaout2022utilizing,pellissier2021neural,wright2020schimatos,fierro2022application} can all suggest repairs that incorporate information that is already in the KG.

Our investigation of LLMs for SHACL-based KG repair addresses the prevalent role of expert-curated heuristics in producing and prioritizing graph repairs. Purely symbolic methods~\cite{ahmetaj2022repairing} are limited in their ability to assess the appropriateness of heuristics such as minimal cardinality, particularly when lacking access to semantic cues such as naming patterns in the KG.  
Manual methods~\cite{wright2020schimatos} cannot scale to large numbers of fixes. 
% due to the requirement of human intervention.
The remaining heuristic methods discussed above cannot be proven to generalize to the possible repairs required by a KG, nor can they easily differentiate between symbolically similar but semantically different scenarios.
LLMs, by contrast, can memorize domain facts and capture relationships between them, 
% incorporate powerful combinations of commonsense reasoning and ation of  that 
enabling strong performance on different language-based tasks~\cite{liangHolisticEvaluationLanguage2023,brownLanguageModelsAre2020}. They may act as substitutes for the combinations of heuristics and constraint solvers for KG repair.
This paper proposes initial benchmarks and a methodology for evaluating the performance of LLMs on repair tasks.

\section{Evaluation Framework for LLM-Based KG Repair}
\label{sec:overview}

We posit that LLMs can enhance KG repair with natural language processing and domain-specific understanding.
LLM-based repair systems could automate SHACL repairs, making SHACL easier for practical use. They could integrate symbolic reasoning components, such as a logic programming solver, with a semantic component that considers domain-specific information. However, research on such KG repair systems hinges on accurate evaluation frameworks with rigorous metrics, a gap we aim to fill.
%Research into KG repair systems, including emerging LLM-based systems, requires test cases and rigorous evaluation metrics.
%We identify a gap in the literature regarding an evaluation framework for this purpose.

In this section, we outline our automated test-case generation methodology for KG repair systems and the corresponding evaluation metrics. As illustrated in Fig.~\ref{fig:flow}, the  evaluation framework takes a SHACL manifest $\mathcal{S}$ and a valid KG $\mathcal{G}$ (free of violations) and generates a set of VIOs, represented as SPARQL operations~\cite{SPARQL}, such as  addition (\texttt{add($\cdot$)}) and removal (\texttt{remove($\cdot$)}) operations. Each VIO is applied on a copy of $\mathcal{G}$, resulting in an invalid KG $\mathcal{G}^\prime$. Each pair of $\mathcal{S}$ and $\mathcal{G}^\prime$ constitutes a test case for the repair system. 
After validating $\mathcal{G}^\prime$ against $\mathcal{S}$, we obtain a validation report.  The report along with $\mathcal{G}^\prime$ and $\mathcal{S}$ is used by the repair system to generate a repair $\pi$, represented as \texttt{add($\cdot$)} and \texttt{remove($\cdot$)} SPARQL operations. Applying $\pi$ to $\mathcal{G}^\prime$ yields a repaired KG $\mathcal{G}^\prime_\pi$. We assess the repair using several metrics: First, $\pi$ should be parsed without error. Second, $\pi$ should eliminate the violation, i.e., $\mathcal{G}^\prime_\pi$ should be free of violations when validated against $\mathcal{S}$. Third, $\pi$ should revert the VIO and recover the original KG $\mathcal{G}$. Finally, $\pi$ should be cost-efficient to generate. 

%The evaluation framework can operate on a black-box repair system, in that it requires no knowledge about the repair system to generate the validation datasets. 
The evaluation framework treats the repair system as a black box, requiring no internal knowledge to generate validation datasets. We showcase the proposed evaluation framework on a family of prototype LLM-based repair systems that we build since existing repair methods either involve human intervention~\cite{wright2020schimatos,fierro2022application,fierro2020mortar}, require editing history~\cite{pellissier2019learning,pellissier2021neural,fan2019deducing} or output sets of repairs~\cite{ahmetaj2022repairing}.
%However, we note that use of the framework is \emph{not limited to LLM-based repair systems.}
%prototype repair systems that consist of solely an LLM.
%Critically, the repair system is not limited to LLMs alone.
%Our framework can also evaluate repair systems with symbolic components such as logic solvers, expert rules, and external knowledge bases such as webpages~\cite{pellissier2021neural}.%, in addition to an LLM as a semantic component.
We evaluate several LLM prompt templates to measure the impact of different pieces of information on the performance of the repair system.
%We design several prompting templates and use the proposed evaluation framework to determine what information is crucial to include in the prompt.
Our results demonstrate the utility of our evaluation framework for designing repair systems and provide insights into effective prompting strategies for LLM-based SHACL repair.
\begin{figure}[t!]
    \centering
    \addtolength{\leftskip}{-1.3cm}
    \addtolength{\rightskip}{-1.3cm}
    \includegraphics[width=1\linewidth]{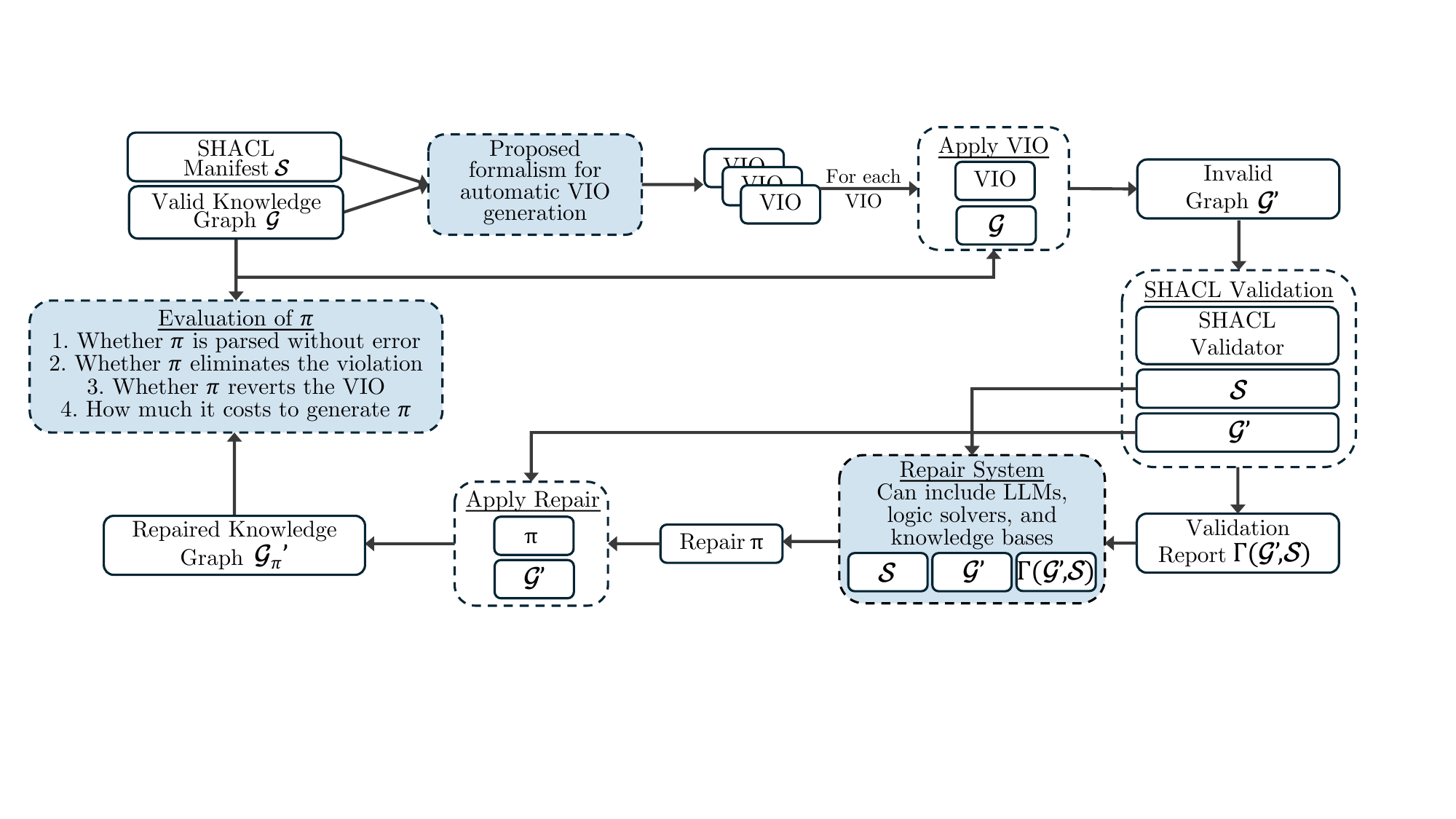}
    \caption{\textbf{The proposed evaluation framework. The shaded blocks are our main contributions.}}
    \label{fig:flow}
    \vspace{-1em}
\end{figure}

%The results show the utility of our evaluation framework in designing a repair system and providing insights into prompting strategies for LLM-based repair systems addressing SHACL validation reports.

\section{Generation of Violation-Inducing Operations}
\label{sec:method}
\revwayne{We begin by introducing key concepts for the VIO generation algorithm, using the running example in Fig.~\ref{fig:main_fig2}. Fig.~\ref{fig:run_a} specifies that papers must be reviewed by at least one and at most three qualified reviewers, defined as professors who are also committee members. Fig.~\ref{fig:run_b} presents a knowledge graph that satisfies these constraints, modeling the relationships between two papers—\texttt{{\color{C5}ex:}PaperABC} and \texttt{{\color{C5}ex:}PaperA}—and their respective reviewers and authors.}
%We start by introducing preliminary notions that will be used to describe the VIO generation algorithm.
\subsection{Preliminaries}
\label{sec:prelim}

\noindent \textbf{Definition (Knowledge Graph).}
% \begin{definition} [Knowledge Graph]
% \end{definition}
A knowledge graph $\mathcal{G}(\mathcal{V}, \mathcal{E})$ consists of nodes $\mathcal{V}$ and edges $\mathcal{E}$. A triple $(u, e, v)\in\mathcal{G}$ iff $u, v\in\mathcal{V}$ and $e\in\mathcal{E}$. 
%\begin{definition} [SHACL Manifest]
%\end{definition}
\begin{figure}[t!]
\centering
%\addtolength{\leftskip} {-1.3cm}
%\addtolength{\rightskip}{-1.3cm}

\begin{subfigure}[t]{0.52\textwidth}
\centering
\begin{lstlisting}
:PaperShape a sh:NodeShape;
 sh:targetClass ex:Paper;
 sh:property :ReviewedByShape.
:ReviewedByShape a sh:PropertyShape;
 sh:path ex:reviewedBy;
 sh:qualifiedValueShape :ReviewerShape;
 sh:qualifiedMaxCount 3;
 sh:qualifiedMinCount 1.
:ReviewerShape a sh:NodeShape;
 sh:targetNode ex:Dan;
 sh:class ex:Professor, 
          ex:CommitteeMember.
 \end{lstlisting}
\caption{Fig.~\ref{fig:main_fig2} (a) \textbf{SHACL manifest $\mathcal{S}$}}
\label{fig:run_a}
\end{subfigure}
\hspace{0.01\textwidth} % Adjust the space between the subfigures as needed
\begin{subfigure}[t]{0.45\textwidth}
\centering
\begin{lstlisting}
ex:PaperABC a ex:Paper;
 ex:reviewedBy ex:Alice, ex:Bob, 
               ex:Clark;
 ex:author ex:Ethan.
ex:PaperA a ex:Paper;
 ex:reviewedBy ex:Alice.
ex:Alice a ex:Professor, 
         ex:ComitteeMember.
ex:Bob a ex:Professor, 
         ex:ComitteeMember.
ex:Clark a ex:Student.
ex:Dan a ex:Professor,
         ex:ComitteeMember.\end{lstlisting}
\caption{Fig.~\ref{fig:main_fig2} (b) \textbf{Knowledge graph $\mathcal{G}$}}
\label{fig:run_b}
\end{subfigure}
\phantomcaption
\label{fig:main_fig2}
\vspace{-1em}
\end{figure}

\noindent \textbf{Definition (SHACL Manifest).}
The SHACL manifest $\mathcal{S}$ consists of two disjoint sets of shapes, namely, node shapes and property shapes. 
A shape is a tuple, $\Sigma_s=\langle s, \tau_{s}, \mu_s, \kappa_{s} \rangle\in \mathcal{S}$, where $s$ is the unique shape name of $\Sigma_s$. $\tau_{s}:\mathcal{G}\to 2^\mathcal{V}$\footnote{$2^\mathcal{V}$ denotes the power set of $\mathcal{V}$, the set of all possible subsets of $\mathcal{V}$} is the targeting function that produces the set of \textit{focuses} to be validated against $s$. \revwayne{For example, in Fig.~\ref{fig:main_fig2}, $\tau_{\texttt{:PaperShape}}=\{\texttt{{\color{C5}ex:}PaperABC}, \texttt{{\color{C5}ex:}PaperA}\}$, as declared by the \texttt{{\color{C2}sh:}targetClass} predicate.} $\mu_s: \mathcal{V}\to 2^\mathcal{V}$ is the value mapping function, which, for a given node $v$, returns the set of \textit{values} that are evaluated during the validation of $v$. For a node shape, $\mu_s(v)$ produces a singleton set with $v$ being the only member. For a property shape, $\mu_s(v)$ produces the set of nodes in $\mathcal{G}$ that can be reached from $v$ with the path mapping $p_s$ of $s$ (i.e., the parameter value of \texttt{{\color{C2}sh:}path}). \revwayne{For example, $\mu_{\texttt{:ReviewedByShape}}(\texttt{{\color{C5}ex:}PaperA})=\{\texttt{{\color{C5}ex:}Alice}\}$.} $\kappa_s=\{\kappa_s^i=(\xi_s^i, \omega_s^i)\mid i \in I_s\}$ is the set of pairs of SHACL constraints and their parameter values. The permissible data type of $\omega_s^i$ depends on the corresponding constraint $\xi_s^i$. For example, if $\xi_s^i=\texttt{{\color{C2}sh:}class}$, then $\omega_s^i$ is a class name. $I_s=\{1, \cdots, n_s\}$ is the index set for the constraints of $s$ and $n_s$ is the number of constraints in $s$. \revwayne{For example, $n_{\texttt{:PaperShape}}=1$ and $\kappa_{\texttt{:PaperShape}}=\{\kappa^1_{\texttt{:PaperShape}}=(\texttt{{\color{C2}sh:}property}, \texttt{:ReviewedByShape})\}$.} On the other hand, $\omega_s^i$ is the name of another shape if and only if $\kappa_s^i$ and $\xi_s^i$ are shape-based. In this case, there exists a shape $\Sigma_{\omega_s^i}=\langle \omega_s^i, \tau_{\omega_s^i}, \mu_{\omega_s^i}, \kappa_{\omega_s^i} \rangle$. If there exists $j$ such that $\kappa_{\omega_s^i}^j$ is shape-based, then $\Sigma_{\omega_s^i}$ and $\Sigma_{\omega_{\omega_s^i}^j}$ are both part of the dependency of $\Sigma_s$. \revwayne{For example, $\Sigma_{\texttt{:ReviewedByShape}}$ and $\Sigma_{\texttt{:ReviewerShape}}$ are in the dependency of $\Sigma_{\texttt{:PaperShape}}$.}

\noindent \textbf{Definition (Validation Semantics).}
% \begin{definition} [Validation Semantics]
% \end{definition}
%The SHACL standard defines a validation function $\gamma(\kappa_{s}^i, c)$ that takes the $i$-th constraint $\kappa_s^i$ of $s$ and a context $c$ to map to a validation result $\rho_{s,c}^i \in \mathcal{R}$, where $\mathcal{R}$ is the set of all possible validation results, $\mathcal{R}^\prime$, augmented with empty result, i.e., $\mathcal{R}=\mathcal{R}^\prime\cup\{\varnothing\}$. A context is a pair of focus-value pair, e.g., $(t, \mu_s(t))$, where $t\in\tau_s(\mathcal{G})$. 
The SHACL standard defines a validation function $\Gamma(v, \kappa_{s}^i)$ that takes a node $v\in\mathcal{V}$ and the $i$-th constraint $\kappa_s^i$ of $s$ and maps them to a validation result $\rho_{s,v}^i \in \mathcal{R}$, where $\mathcal{R}$ is the set of all possible validation results, $\mathcal{R}^\prime$, augmented with the empty result, i.e., $\mathcal{R}=\mathcal{R}^\prime\cup\{\emptyset\}$. $\Gamma$ uses the value mapping function $\mu_s$ to find the set of value nodes $\mu_s(v)$ and produce $\rho_{s, v}^i\neq\emptyset$ if $v$ violates $\kappa_s^i$. In that case, $s$ is called the source shape, $\kappa_s^i$ is called the source constraint, and $v$ is called the focus of the result $\rho_{s, v}^i$. Otherwise, if $\rho_{s, v}^i=\emptyset$, we say that $v$ \textbf{satisfies} $\kappa_s^i$, denoted by $v\models\kappa_s^i$. Furthermore, we say %
\begin{align*}
    v\models \Sigma_s \mathrm{\ if\ } \forall i\in I_s, \ v\models\kappa_s^i\quad \text{and}\quad
    \mathcal{G}\models\mathcal{S} \mathrm{\ if\ } \forall\Sigma_s\in\mathcal{S}, \forall v\in\tau_s(\mathcal{G}),\ v\models\Sigma_{s}.
\end{align*}
%In the validation of $\mathcal{G}$ against a shape $\Sigma_s$, we iterate through all constraints and focus nodes of $s$ to calculate $\{\rho^i_{s, v}\mid \forall i\in I_s, v\in\tau_s(\mathcal{G})\}$. We abuse notation $\Gamma(\mathcal{G}, \Sigma_s)$ to express the validation of $\mathcal{G}$ against a shape $\Sigma_s$. Moreover, the validation of $\mathcal{G}$ against a SHACL manifest $\mathcal{S}$ is achieved by iterating through all shapes $\Sigma_s\in\mathcal{S}$ and calculating $\Gamma(\mathcal{G}, \mathcal{S})=\{\rho^i_{s,v}\mid \forall i\in I_s,\Sigma_s\in\mathcal{S}, v\in\tau_s(\mathcal{G})\}$. Again, we abuse notation and use $\Gamma(\mathcal{G}, \mathcal{S})$ to express the validation of $\mathcal{G}$ against $\mathcal{S}$. 
We abuse the notation $\Gamma(\mathcal{G}, \mathcal{S})$ to express the set of validation results of $\mathcal{G}$ against a manifest $\mathcal{S}$, which is also called the validation report.

%We denote $\Gamma(\mathcal{G}, \mathcal{S})$ as the validation function of all shapes $\Sigma_s\in\mathcal{S}$ against $\mathcal{G}$, and say that $\mathcal{G}$ satisfies $\mathcal{S}$ if $\Gamma(\mathcal{G}, \mathcal{S})=\{\rho^i_{s, \left(t, \mu_s(t)\right) }\mid \forall i\in I_s,\Sigma_s\in\mathcal{S}, t\in\tau_s(\mathcal{G})\}=\{\varnothing\}$. The resulting set of validation results is also called a validation report.

% \section{Generation of Violation-Inducing Operations}
% \label{sec:method}

Given a KG $\mathcal{G}$ that satisfies a manifest $\mathcal{S}$, we design an algorithm that generates a set of VIOs. %A VIO is the minimal set of graph edits that will make $\mathcal{G}$ no longer satisfy $\mathcal{S}$. 
A VIO($\kappa_s^i, \mathcal{F}$) takes the $i$-th constraint of $s$ and a set of focus nodes $\mathcal{F}$ and maps them to a set of SPARQL operations. %, each corresponding to a focus $f$ in $\mathcal{F}$. %If a SPARQL operation corresponding to $f$ is applied, $f$ no longer satisfies $\kappa_s^i$. 
VIO($\kappa_s^i, \mathcal{F}$) is defined to be shape-based if and only if $\kappa_s^i$ is shape-based.

\begin{sloppypar}
\noindent \textbf{Running Example.}\label{example1}
Fig.~\ref{fig:main_fig2} shows a manifest $\mathcal{S}$ and a KG $\mathcal{G}$ that satisfies $\mathcal{S}$.
% \texttt{VIO((:ReviewerShape,{\color{C2}sh:}class,{\color{C5}ex:}Professor), \{{\color{C5}ex:}Dan\})} corresponds to the operation \texttt{remove(({\color{C5}ex:}Dan,{\color{C0}a},{\color{C5}ex:}Professor))}, which will be detailed later.
VIOs of non-shape-based constraints (such as \texttt{{\color{C2}sh:}class}) correspond to straightforward edits.
Handling shape-based constraints such as \texttt{{\color{C2}sh:}property} requires recursively resolving and violating the nested constraints.
In the following section, we introduce an abstract rewriting system~\cite{klop1990term} that expands shape-based VIOs into non shape-based ones.
\end{sloppypar}

% For non shape-based constraints such as \texttt{{\color{C2}sh:}class}, inducing a violation does not depend on other shapes.
% However, for shape-based constraints such as \texttt{(:PaperShape,{\color{C2}sh:}property,:ReviewedByShape)}, violating the constraint depends on violating at least one constraint in \texttt{:ReviewedByShape}.

\subsection{Abstract Rewriting System}
\label{sec:ars}
We define the rewriting objects of the rewriting system $\mathcal{A}=(\texttt{Expr}, \to)$.
\begin{align*}
    \texttt{Term}  ::=\texttt{VIO($\kappa_s^i, \mathcal{F}$)}\mid \texttt{Term} \cdot \texttt{Term},\quad\quad
    \texttt{Expr} ::= \texttt{Term} \mid \texttt{Expr} + \texttt{Term},
\end{align*}
\revwayne{where $\cdot$ represents simultaneous application (analogous to logical \texttt{and}), and $+$ represents alternative application (analogous to logical \texttt{or}). The arrow ($\to$) denotes a set of rewrite rules that transforms the left-hand object into the right-hand object.}
An object is said to be in \textbf{normal form} (or is normalized) if no rewrite rules can be applied. The system is \textbf{strongly normalizing} if every object can be normalized in finitely many steps; that is, every rewriting sequence terminates. \revwayne{Repeated application of the rewrite rules to objects in $\mathcal{A}$ produces an \textbf{expansion tree}: each expression corresponds to a node, with leaf nodes representing normalized expressions, and internal nodes representing expressions with shape-based VIOs that can be further rewritten by the rules below.} 
%In $\mathcal{A}$, each term corresponds to a node in the expansion tree. The leaf nodes of the expansion tree are normalized and the internal nodes are terms with shape-based VIOs that can be rewritten by the rewrite rules below.
Given \texttt{VIO($\kappa_s^i, \mathcal{F}$)}, where $\kappa_s^i=(\xi_s^i, \omega_s^i)$, recall that if $\kappa_s^i$ is shape-based, $\omega_s^i$ is a shape name. The specific rewrite rules are provided for different values of $\xi_s^i$ as follows.

\textbf{Rule 1: $\xi_s^i=$ {\color{C2}sh:}node or $\xi_s^i=${\color{C2}sh:}property.}
A violation is induced if any value node of any focus $f\in\mathcal{F}$ violates any of the $\{1,\cdots,n_{\omega_s^i}\}$ constraints of $\omega_s^i$. The first rule is thus defined as follows:
\begin{equation*}
    \text{Rule 1: }\texttt{VIO($\kappa_s^i, \mathcal{F}$)$\to$VIO($\kappa^1_{\omega_s^i}, \mathcal{F}^\prime$)$+\cdots+$VIO($\kappa^{n_{\omega_s^i}}_{\omega_s^i}, \mathcal{F}^\prime $)}, 
\end{equation*} 
where $\mathcal{F}^\prime=\{v\mid \forall f\in\mathcal{F}, \forall v\in\mu_s(f)\}$. The plus signs separate the children of \texttt{VIO($\kappa_s^i, \mathcal{F}$)} in the expansion tree and each child VIO represents a choice to induce violation on $\kappa_s^i$.

\textbf{Rule 2: $\xi_s^i=$ {\color{C2}sh:}qualifiedValueShape.} In addition to $\omega_s^i$ as the parameter value, \texttt{{\color{C2}sh:}qualifiedMaxCount} and \texttt{{\color{C2}sh:}qualifiedMinCount} are two additional parameters. To violate \texttt{{\color{C2}sh:}qualifiedMaxCount}, all constraints in the dependency of $\Sigma_{\omega_s^i}$ must be satisfied. Therefore, no rewriting rule is required. We define the corresponding SPARQL operation in App.~\ref{app:qMaxCount}. We focus on the case when there exists $h\in I_s$ such that $\xi_s^h=$\texttt{{\color{C2}sh:}qualifiedMinCount} as follows.

We define $\psi_{\omega_s^i}(f)=\{ v\in\mu_s(f) \mid v\models \Sigma_{\omega_s^i} \}$, a value mapping function that is stricter than $\mu_s$, in the sense that $\psi_{\omega_s^i}(f)$ only admits nodes $v\in\mu_s(f)$ that satisfy $\Sigma_{\omega_s^i}$. For a focus node $f\in\mathcal{F}$, we must make edits such that $|\psi_{\omega_s^i}(f)|$ becomes strictly smaller than $\omega_s^h$, the parameter value of \texttt{{\color{C2}sh:}qualifiedMinCount} \revwayne{that is a positive integer}. For a node $v\in\psi_{\omega_s^i}(f)$, there are two strategies to remove $v$ from $\psi_{\omega_s^i}(f)$: (a) Make $v$ violate $\omega_s^i$, or (b) Make $v$ unreachable from $f$ via the path mapping $p_s$.
Therefore, for any subset $\epsilon_f^m\subseteq\psi_{\omega_s^i}(f)$, such that $|\epsilon_f^m|=|\psi_{\omega_s^i}(f)|-\omega_s^h+1$, we apply a combination of the above two strategies for all nodes $v\in\epsilon_f^m$. We create an artificial node shape $\Sigma_{v\_\text{node}}$ and property shape $\Sigma_{f\_v\_\text{prop}}$ to translate the operations into VIOs, where 
\begin{itemize}[noitemsep, topsep=0pt, left=0pt, align=left]
    \item $\tau_{v\_\text{node}}(\mathcal{G})=\{v\}$, $\mu_{v\_\text{node}}(v)=\{v\}, \kappa_{v\_\text{node}}=\{(\texttt{{\color{C2}sh:}node}, \omega_s^i)\}$ and
    \item $\tau_{f\_v\_\text{prop}}(\mathcal{G})=\{f\}$, $\mu_{f\_v\_\text{prop}}(f)=\{v\}, \kappa_{f\_v\_\text{prop}}=\{(\texttt{{\color{C2}sh:}minCount}, 1)\}$.
\end{itemize}
In addition, since $\Sigma_{f\_v\_\text{prop}}$ is a property shape, we let the path mapping $p_{f\_v\_\text{prop}}=p_{s}$.
We define $\Phi(\epsilon_f^m)=\left\{g(v):\epsilon_f^m\to\{v\_\text{node},f\_v\_\text{prop}\}\right\}$, which contains all possible ways to choose ``node'' and ``prop'' for each $v\in\epsilon_f^m$. The second rule is defined as follows
\begin{equation*}
\text{Rule 2: }
\texttt{VIO(}\kappa_s^i, \mathcal{F}\texttt{)}\to \sum_{f\in\mathcal{F}}\sum_{\substack{\epsilon_f^m\subseteq\psi_{\omega_s^i}(f)\\
                   |\epsilon_f^m|=|\psi_{\omega_s^i}(f)|-\omega_s^h+1}}\sum_{g\in\Phi(\epsilon_f^m)}\prod_{v\in \epsilon_f^m}
\texttt{VIO(}\kappa^1_{g(v)}, \tau_{g(v)}(\mathcal{G})\texttt{)}.
\end{equation*}

%\textbf{(ii.) There exists $h\in I_s$ such that $\xi_s^h=$\texttt{{\color{C2}sh:}qualifiedMaxCount}}
%For a focus node $f\in\mathcal{F}$, we must make edits such that $|\psi_{\omega_s^i}(f)|$ becomes strictly larger than $\omega_s^h$, the parameter value of \texttt{{\color{C2}sh:}qualifiedMaxCount}. We need to construct a set $\epsilon_f^M\subseteq \mathcal{V}$, where $\forall.v\in\epsilon_f^M$, $v\models \Sigma_{\omega_s^i}$ and $v\notin\psi_{\omega_s^i}(f)$. Furthermore, $|\epsilon_f^M|=\omega_s^h-|\psi_{\omega_s^i}(f)|+1$. We perform class-aware subgraph monomorphism search in $\mathcal{G}$ following Fierro et al.~\cite{fierro2022application} for nodes that satisfy $\Sigma_{\omega_s^i}$. If only partial matches of $\Sigma_{\omega_s^i}$ are identified, we prompt an LLM to mint new entity names (see App.\ref{app:qMCount} for details). Because all constraints in the dependency of $\Sigma_{\omega_s^i}$ must be satisfied to violate \texttt{{\color{C2}sh:}qualifiedMaxCount}, no rewriting rule is required. Instead, we directly define the SPARQL operation: \texttt{add(($f$,$p_s$,$v$))} for all $v\in\epsilon_f^M$.

We implement logical constraints \texttt{{\color{C2}sh:}or} and \texttt{{\color{C2}sh:}and} in a similar manner for the evaluation in Sec.~\ref{sec:results}.
We leave \texttt{{\color{C2}sh:}xone} and \texttt{{\color{C2}sh:}not} to future work. The following theorem \revwayne{states that, under mild assumptions, $\mathcal{A}$ is strongly normalizing and thus is guaranteed to terminate.}
\begin{theorem}[Strong Normalization of $\mathcal{A}$]
\label{thm:1}
If $\mathcal{G}$ and $\mathcal{S}$ are finite and there are no recursive shapes~\cite{corman2018semantics}, then the rewriting system $\mathcal{A}$ is strongly normalizing.
\end{theorem}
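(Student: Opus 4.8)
The plan is to exhibit a well-founded termination measure on \texttt{Expr} and to show that every single rewrite step strictly decreases it; strong normalization then follows immediately, since an infinite rewriting sequence would induce an infinite strictly descending chain in a well-founded order, a contradiction.

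First I would use the two hypotheses to define a notion of \emph{shape depth}. Because $\mathcal{S}$ is finite and there are no recursive shapes, the dependency relation among shapes (where $s$ depends on $\omega_s^i$ whenever $\kappa_s^i$ is shape-based) is a finite directed acyclic graph. Hence I can set $\mathrm{depth}(s)=0$ when $s$ has no shape-based constraint, and $\mathrm{depth}(s)=1+\max\{\mathrm{depth}(\omega_s^i) : \kappa_s^i \text{ is shape-based}\}$ otherwise; acyclicity and finiteness make this a well-defined natural number, with the key monotonicity property that whenever a constraint of a shape $\omega$ references a shape $\omega'$, then $\mathrm{depth}(\omega') < \mathrm{depth}(\omega)$. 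Next I assign to each VIO leaf a complexity $c(\mathtt{VIO}(\kappa_s^i,\mathcal{F}))$ depending only on $\kappa_s^i$: set $c=0$ if $\kappa_s^i$ is non-shape-based (a normal form); $c=2\,\mathrm{depth}(\omega_s^i)+1$ if $\xi_s^i$ is \texttt{sh:node} or \texttt{sh:property}; and $c=2\,\mathrm{depth}(\omega_s^i)+2$ if $\xi_s^i$ is \texttt{sh:qualifiedValueShape}. The measure of an expression $E$ is the finite multiset $M(E)$ of complexities of all VIO leaves occurring in $E$, and expressions are compared by the Dershowitz--Manna multiset ordering induced by $<$ on $\mathbb{N}$, which is well-founded because $<$ is. Finiteness of $\mathcal{G}$ and $\mathcal{S}$ guarantees that every rewrite produces only finitely many new leaves (the focus sets, the value sets $\mu_s(f)$ and $\psi_{\omega_s^i}(f)$, the subsets $\epsilon_f^m$, the choice functions in $\Phi(\epsilon_f^m)$, and each $n_{\omega_s^i}$ are all finite), so $M(E)$ remains a finite multiset throughout.

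Then I would verify that each rule strictly decreases $M$. A rewrite replaces one VIO leaf by the multiset of leaves appearing on the right-hand side, so in the multiset ordering it suffices to check that every newly created leaf has complexity strictly below that of the rewritten leaf. For Rule 1 the rewritten leaf has complexity $2\,\mathrm{depth}(\omega)+1$, and each child is a VIO on some constraint $\kappa^j_{\omega}$ of $\omega$: if non-shape-based its complexity is $0$, and otherwise it references a shape $\omega'$ with $\mathrm{depth}(\omega')\le \mathrm{depth}(\omega)-1$, giving complexity at most $2(\mathrm{depth}(\omega)-1)+2 = 2\,\mathrm{depth}(\omega) < 2\,\mathrm{depth}(\omega)+1$. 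For Rule 2 the rewritten leaf (a \texttt{sh:qualifiedValueShape} referencing $\omega$) has complexity $2\,\mathrm{depth}(\omega)+2$, and every produced leaf is either a \texttt{sh:minCount} VIO on $\Sigma_{f\_v\_\text{prop}}$ (complexity $0$) or a \texttt{sh:node} VIO on $\Sigma_{v\_\text{node}}$ referencing $\omega$ (complexity $2\,\mathrm{depth}(\omega)+1$), both strictly below $2\,\mathrm{depth}(\omega)+2$. The logical constraints \texttt{sh:and} and \texttt{sh:or} decompose into constraints of referenced shapes of strictly smaller depth and are covered by the same bookkeeping. Since $M$ strictly decreases at every step and the multiset ordering is well-founded, $\mathcal{A}$ is strongly normalizing.

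The main obstacle is Rule 2: its \texttt{sh:node} child references the \emph{same} shape $\omega$ as the rewritten \texttt{sh:qualifiedValueShape}, so shape depth alone does not drop at that step. This is exactly why the complexity gives \texttt{sh:qualifiedValueShape} a strictly larger additive offset ($+2$) than \texttt{sh:node} and \texttt{sh:property} ($+1$): the offset records that a qualified-value VIO must first be ``downgraded'' to a node VIO before the depth can decrease, and getting this offset right is the crux of making the measure decrease uniformly. A secondary point requiring care is the well-definedness of $\mathrm{depth}$, which is precisely where the ``finite and no recursive shapes'' hypothesis is needed.
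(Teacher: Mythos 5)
Your proof is correct, and it takes a genuinely different (and in fact tighter) route than the paper's. Both arguments rest on the same structural fact---finiteness plus non-recursiveness makes the shape-dependency relation a finite DAG, so a depth/longest-path function on shapes is well defined---but the termination machinery differs. The paper defines $\chi$ as the length of the longest shape sequence starting at the source shape, extends it to compound expressions by taking the maximum, observes that Rule 1 strictly decreases $\chi$ while Rule 2 leaves it unchanged, and then closes with an auxiliary argument that constant-$\chi$ (Rule 2) steps can occur only finitely often because each Rule 2 application forces a subsequent Rule 1 application on its \texttt{sh:node} children. You instead build a measure that strictly decreases at \emph{every} step: the offset trick (assigning $2\,\mathrm{depth}(\omega)+2$ to \texttt{sh:qualifiedValueShape} versus $2\,\mathrm{depth}(\omega)+1$ to \texttt{sh:node}/\texttt{sh:property}) absorbs exactly the case the paper has to treat by a separate interleaving argument, and the Dershowitz--Manna multiset ordering handles the replacement of one leaf by many. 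This buys you two things the paper's write-up glosses over: first, no ``decrease-or-stay-constant'' case analysis is needed, since well-foundedness of the multiset order finishes the proof in one stroke; second, your measure behaves correctly for compound expressions, whereas the paper's max-based $\chi$ does not strictly decrease when a rewrite happens inside a non-maximal summand of a sum or product---a gap your multiset formulation silently repairs. The cost is only the (standard) appeal to well-foundedness of the multiset extension, which the paper avoids. In short: same DAG insight, but your termination measure is the more rigorous and more standard rewriting-theoretic one.
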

\begin{proof}
%$\mathcal{S}$ forms a finite directed acyclic graph (DAG). Rule 1 decreases the longest path length of the DAG by one, while Rule 2 leaves it unchanged. Since $\mathcal{G}$ and $\mathcal{S}$ are finite, only finitely many Rule 2 applications are possible between Rule 1 applications. The process terminates when the longest path length reaches zero, so rewriting must terminate. 
See App.~\ref{app:strong} for the proof.
\end{proof}
%\yet{relax the assumption about recursive shapes to recursive constraints/focuses}

\subsection{Constraint Collection and De-duplication}

To systematically evaluate the repair system, we create test cases where each constraint in $\mathcal{S}$ is violated. Assuming $\mathcal{S}$ has no recursive shapes, we first topologically sort its shapes. 
We build expansion trees and perform a depth-first search (DFS) from each root shape. Each child node in an expansion tree represents an option to violate its parents. Thus, if multiple children exist, one is chosen randomly. The DFS collects violations until all constraints in $\mathcal{S}$ are encountered. Finally, we remove duplicates—VIOs with the same constraint and focus that lead to isomorphic invalid graphs—to avoid bias when different shapes share identical constraints and focuses.

\begin{figure}[t!]

    \centering
    \addtolength{\leftskip} {-1.3cm}
    \addtolength{\rightskip}{-1.3cm}
\begin{forest}
  dir tree switch=at 0,
  for tree={
    font=\ttfamily\scriptsize,
    rect,
    align=center,
    inner sep=0.3pt,            % <-- Decrease the padding here
    edge+={draw=darkgray},
    where level=0{%
      colour me out=blue!50!white, % root color
    }{%
      if level=1{%
        colour me out=blue!95!black, % level 1 color
      }{%
        colour me out=magenta!50!orange!75!white, % deeper levels
        edge+={-Triangle},
      },
    },
  }
  [{VIO((:PaperShape,{\color{C2}sh:}property,:ReviewedByShape),\{{\color{C5}ex:}PaperABC,{\color{C5}ex:}PaperA\})}
    [{VIO((:ReviewedByShape,{\color{C2}sh:}qualifiedValueShape,:ReviewerShape;{\color{C2}sh:}qualifiedMinCount,1),\{{\color{C5}ex:}PaperABC,{\color{C5}ex:}PaperA\})}, edge label={node[midway,left,font=\scriptsize,color=C6]{1}}
      [{VIO((Alice\_node,{\color{C2}sh:}node,:ReviewerShape),\{{\color{C5}ex:}Alice\})VIO((Bob\_node,{\color{C2}sh:}node,:ReviewerShape),\{{\color{C5}ex:}Bob\})} , edge label={node[midway,left,font=\scriptsize,color=C6]{2}}
        [{VIO((:ReviewerShape,{\color{C2}sh:}class,{\color{C5}ex:}Professor),\{{\color{C5}ex:}Alice\})VIO((:ReviewerShape,{\color{C2}sh:}class,{\color{C5}ex:}Professor),\{{\color{C5}ex:}Bob\})}]
        [{VIO((:ReviewerShape,{\color{C2}sh:}class,{\color{C5}ex:}Professor),\{{\color{C5}ex:}Alice\})VIO((:ReviewerShape,{\color{C2}sh:}class,{\color{C5}ex:}CommitteeMember),\{{\color{C5}ex:}Bob\})}, edge label={node[midway,left,font=\scriptsize,color=C6]{4}}]
        [{VIO((:ReviewerShape,{\color{C2}sh:}class,{\color{C5}ex:}CommitteeMember),\{{\color{C5}ex:}Alice\})VIO((:ReviewerShape,{\color{C2}sh:}class,{\color{C5}ex:}Professor),\{{\color{C5}ex:}Bob\})}]
        [{VIO((:ReviewerShape,{\color{C2}sh:}class,{\color{C5}ex:}CommitteeMember),\{{\color{C5}ex:}Alice\})VIO((:ReviewerShape,{\color{C2}sh:}class,{\color{C5}ex:}CommitteeMember),\{{\color{C5}ex:}Bob\})}]
      ]
      [{VIO((Alice\_node,{\color{C2}sh:}node,:ReviewerShape),\{{\color{C5}ex:}Alice\})VIO((PaperABC\_Bob\_prop,{\color{C2}sh:}minCount,1),\{{\color{C5}ex:}PaperABC\})}, edge label={node[midway,left,font=\scriptsize,color=C6]{3}}
        [{VIO((:ReviewerShape,{\color{C2}sh:}class,{\color{C5}ex:}Professor),\{{\color{C5}ex:}Alice\})VIO((PaperABC\_Bob\_prop,{\color{C2}sh:}minCount,1),\{{\color{C5}ex:}PaperABC\})}, edge label={node[midway,left,font=\scriptsize,color=C6]{5}}]
        [{VIO((:ReviewerShape,{\color{C2}sh:}class,{\color{C5}ex:}CommitteeMember),\{{\color{C5}ex:}Alice\})VIO((PaperABC\_Bob\_prop,{\color{C2}sh:}minCount,1),\{{\color{C5}ex:}PaperABC\})}, edge label={node[midway,left,font=\scriptsize,color=C6]{6}}]
      ]
      [{VIO((PaperABC\_Alice\_prop,{\color{C2}sh:}minCount,1),\{{\color{C5}ex:}PaperABC\})VIO((Bob\_node,{\color{C2}sh:}node,:ReviewerShape),\{{\color{C5}ex:}Bob\})}
        [{VIO((PaperABC\_Alice\_prop,{\color{C2}sh:}minCount,1),\{{\color{C5}ex:}PaperABC\})VIO((:ReviewerShape,{\color{C2}sh:}class,{\color{C5}ex:}Professor),\{{\color{C5}ex:}Bob\})}]
        [{VIO((PaperABC\_Alice\_prop,{\color{C2}sh:}minCount,1),\{{\color{C5}ex:}PaperABC\})VIO((:ReviewerShape,{\color{C2}sh:}class,{\color{C5}ex:}CommitteeMember),\{{\color{C5}ex:}Bob\})}]]
      [{VIO((PaperABC\_Alice\_prop,{\color{C2}sh:}minCount,1),\{{\color{C5}ex:}PaperABC\})VIO((PaperABC\_Bob\_prop,{\color{C2}sh:}minCount,1),\{{\color{C5}ex:}PaperABC\})}]
      [{VIO((Alice\_node,{\color{C2}sh:}node,{\color{C5}ex:}ReviewerShape),\{{\color{C5}ex:}Alice\})}
        [{VIO((:ReviewerShape,{\color{C2}sh:}class,{\color{C5}ex:}Professor),\{{\color{C5}ex:}Alice\})}]
        [{VIO((:ReviewerShape,{\color{C2}sh:}class,{\color{C5}ex:}CommitteeMember),\{{\color{C5}ex:}Alice\})}]]
      [{VIO((PaperA\_Alice\_prop,{\color{C2}sh:}minCount,1),\{{\color{C5}ex:}PaperA\})}]
    ]
    [{VIO((:ReviewedByShape,{\color{C2}sh:}qualifiedValueShape,:ReviewerShape;{\color{C2}sh:}qualifiedMaxCount,3),\{{\color{C5}ex:}PaperABC,{\color{C5}ex:}PaperA\})}, edge label={node[midway,left,font=\scriptsize,color=C6]{7}}]
  ]
\end{forest}
    \caption{\textbf{Expansion tree and DFS on the running example.}}
    \label{fig:tree}
    \vspace{-1em}
\end{figure}
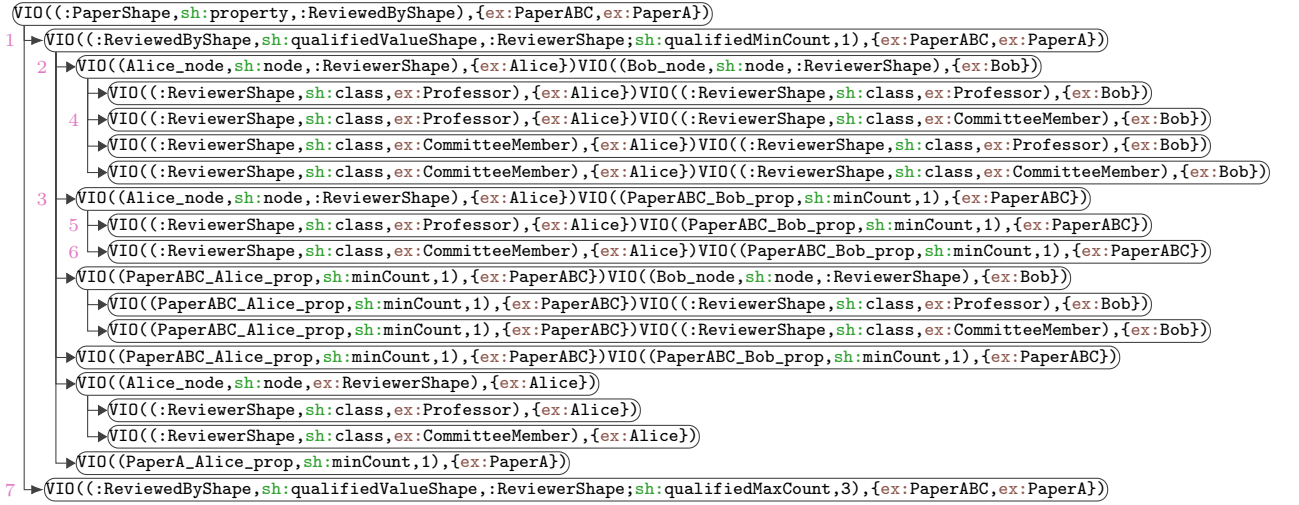

\noindent\textbf{Running Example.}
Fig.~\ref{fig:tree} shows the expansion tree of the running example. The shapes in topological order are \texttt{:PaperShape}, \texttt{:ReviewedByShape}, and \texttt{:ReviewerShape}. Performing DFS from \texttt{:PaperShape}, the set of paths $\mathcal{P}_1=\{(1, 2, 4), (7)\}$ collects all the constraints. Conversely, the set of paths $\mathcal{P}_2=\{(1, 3, 5), (7)\}$ does not collect all constraints, whereas $\mathcal{P}_3=\{(1, 3, 5), (1, 3, 6), (7)\}$ does. Consequently, DFS terminates for $\mathcal{P}_1$ and $\mathcal{P}_3$, but not for $\mathcal{P}_2$.

\subsection{Apply Normalized Expressions on Copies of KG}
For each leaf node in the set of paths collected from the DFS, we make a copy of $\mathcal{G}$ and apply a predefined set of SPARQL operations to obtain $\mathcal{G}^\prime$. Given \texttt{VIO($\kappa_s^i, \mathcal{F}$)}, where $\kappa_s^i=(\xi_s^i, \omega_s^i)$, we apply the SPARQL operations based on the value of $\xi_s^i$ as follows.

\textbf{$\xi_s^i=${\color{C2}sh:}class.} 
For any $f$ in $\mathcal{F}$, $f$ satisfies $\kappa_s^i$ only if ($v$,{\color{C0}a},$\omega_s^i$)$\in\mathcal{G}$ for all $v\in\mu_s(f)$, \revwayne{where $\omega^i_s$ is a class name.} Therefore, we randomly choose one $f\in\mathcal{F}$ and one $v\in\mu_s(f)$ and perform the SPARQL operation \texttt{remove(($v$,{\color{C0}a},$\omega_s^i$))} to remove the triple ($v$,{\color{C0}a},$\omega_s^i$) from $\mathcal{G}$. When $s$ is a property shape and $\mu_s(f)=\emptyset$ (i.e., $f$ has no value node), we perform the SPARQL operation \texttt{add(($f$,$p_s$,$\ell$))} to add the triple ($f$,$p_s$,$\ell$) to $\mathcal{G}$, where $\ell$ is a randomly selected literal from $\mathcal{G}$.

\textbf{$\xi_s^i=${\color{C2}sh:}minCount.} For any $f$ in $\mathcal{F}$, $f$ satisfies $\kappa_s^i$ only if $|\mu_s(f)|\geq \omega_s^i$, \revwayne{where $\omega^i_s$ is a positive integer}. Therefore, we randomly choose a subset $\mathcal{M}\subseteq\mu_s(f)$ of size $|\mu_s(f)|-\omega_s^i+1$ and perform \texttt{remove(($f$,$p_s$,$v$))} for all $v\in\mathcal{M}$.

Other non shape-based constraints can be implemented similarly. We implement \texttt{{\color{C2}sh:}datatype}, \texttt{{\color{C2}sh:}nodeKind}, \texttt{{\color{C2}sh:}maxCount}, \texttt{{\color{C2}sh:}hasValue}, and \texttt{{\color{C2}sh:}in} for the  evaluation in Sec.~\ref{sec:results}, leaving additional constraints for future work.

\noindent\textbf{Running Example.} Consider $\mathcal{P}_3$ from our running example, with leaf nodes
\begin{enumerate}[noitemsep, topsep=0pt, left=0pt, align=left]
    \item [(E1)] \parbox[t]{\linewidth}{%
        \texttt{VIO((:ReviewerShape,{\color{C2}sh:}class,{\color{C5}ex:}CommitteeMember),\{{\color{C5}ex:}Alice\})$\cdot$} \\
        \texttt{VIO((PaperABC\_Bob\_prop,{\color{C2}sh:}minCount,1),\{{\color{C5}ex:}PaperABC\})}
    }
    \item [(E2)] \parbox[t]{\linewidth}{%
        \texttt{VIO((:ReviewerShape,{\color{C2}sh:}class,{\color{C5}ex:}Professor),\{{\color{C5}ex:}Alice\})$\cdot$}\\
        \texttt{VIO((PaperABC\_Bob\_prop,{\color{C2}sh:}minCount,1),\{{\color{C5}ex:}PaperABC\})}
    }
    \item [(E3)] \parbox[t]{\linewidth}{%
        \texttt{VIO((:ReviewedByShape,{\color{C2}sh:}qualifiedValueShape,:ReviewerShape;\\{\color{C2}sh:}qualifiedMaxCount,3),\{{\color{C5}ex:}PaperABC,{\color{C5}ex:}PaperA\})}
    }
\end{enumerate}
\revwayne{By taking E1, for example, we apply the SPARQL operations:  
\texttt{remove(({\color{C5}ex:}Alice, {\color{C0}a}, {\color{C5}ex:}CommitteeMember))}, which disqualifies \texttt{{\color{C5}ex:}Alice} as a \texttt{:Reviewer} for \texttt{{\color{C5}ex:}PaperABC}, and \texttt{remove(({\color{C5}ex:}PaperABC,{\color{C5}ex:}reviewedBy,{\color{C5}ex:}Bob))}, which removes \texttt{{\color{C5}ex:}Bob} as a \texttt{:Reviewer} of \texttt{{\color{C5}ex:}PaperABC}. These operations are performed on the same copy of $\mathcal{G}$ to produce $\mathcal{G}^\prime$. We validate $\mathcal{G}^\prime$ against $\mathcal{S}$ and obtain the validation report in Fig.~\ref{lst:val_report}.}
%For T1, \revwayne{based on the abovementioned SPARQL operations}, we apply \texttt{remove(({\color{C5}ex:}Alice, {\color{C0}a}, {\color{C5}ex:}CommitteeMember))}, \revwayne{which makes \texttt{{\color{C5}ex:}Alice} no longer a qualified \texttt{:Reviewer} for \texttt{{\color{C5}ex:}PaperABC},}and \texttt{remove} \texttt{(({\color{C5}ex:}PaperABC, {\color{C5}ex:}reviewedBy, {\color{C5}ex:}Bob))}, \revwayne{which makes \texttt{{\color{C5}ex:}PaperABC} no longer \texttt{{\color{C5}ex:}reviewedBy} \texttt{{\color{C5}ex:}Bob}, } on a copy of $\mathcal{G}$ to obtain $\mathcal{G}^\prime$. We validate $\mathcal{G}^\prime$ against $\mathcal{S}$ and obtain a validation report as in Fig.~\ref{lst:val_report}.
E1 produces two validation results, though it was to violate one constraint, \texttt{(:PaperShape,{\color{C2}sh:}property,:ReviewedByShape)}, with one focus \texttt{{\color{C5}ex:}PaperABC}. The VIO also affects \texttt{{\color{C5}ex:}PaperA}. We define the \textbf{amplification factor} $\alpha$ of an expression $E$ as the number of non-empty validation results in $\Gamma(\mathcal{G}^\prime, \mathcal{S})$. In this case, we have $\alpha($E1$)=2$. For a KG with high dependencies, $\alpha$ can be high. 
We apply E2 and E3 on copies of $\mathcal{G}$ to generate two more violating  instances of the KG, yielding three test cases for the running example.

\begin{table}[t!]
%\addtolength{\leftskip} {-0.7cm}
%\addtolength{\rightskip}{-0.7cm}
\parbox[t]{.54\linewidth}{
\centering    
\caption{\textbf{Summary of contexts.}}
\label{tab:label}
\scalebox{0.9}{\scriptsize\begin{tabular}{@{}lcl@{}}
       \toprule
       &\textbf{Label} & \textbf{Description} \\ \midrule
       \multirow{5}{*}{\textbf{\shortstack{SHACL \\ Manifest \\ Context}}}  & $M$      & The entire SHACL manifest excluding  \\
       && natural language descriptions of classes \\
       & $S$      & The source constraint and dependency  \\
       & $S_n$    &  $S$ unioned with natural language  \\
       && descriptions of classes \\ \midrule
      \multirow{4}{*}{\textbf{\shortstack{Knowledge \\Graph \\Context}}} & $G$      & The entire knowledge graph  \\
       & $F$      & Triples used to validate the focus node  \\
       & &and relevant value nodes \\
       & $F^+$     & $F$ unioned with a positive example \\ \bottomrule
    \end{tabular}}
}
\hfill
\parbox[t]{.45\linewidth}{
\centering
    \caption{\textbf{Dataset statistics.} The $\mathcal{G}$ and $\mathcal{S}$ sizes of Brick are averaged across eight different KGs and manifests. }
    \label{tab:amp_factor}
\scalebox{0.9}{\scriptsize\begin{tabular}{c@{}ccccc}\toprule
          & $\mathcal{G}$ size & $\mathcal{S}$ size & test&  \multicolumn{2}{c}{$\alpha$}   \\
          & (\#triples) & (\#triples) & cases& mean  & max\\\midrule
     Brick &35& 108 & 144 & 3.23& 13\\
     LUBM & 229 & 207 &70 &2.51 & 14\\
     QUDT & 81,293 & 2,642&134& 1.5& 26\\\bottomrule
    \end{tabular}}
}
%\vspace{-1em}
\end{table}

\begin{figure}[t]
    \centering
\begin{lstlisting}
Validation Report
Conforms: False
Results (2):
Constraint Violation in QualifiedMinCountConstraintComponent:
    Source Shape: :ReviewedByShape
    Focus Node: ex:PaperABC
    Result Path: ex:reviewedBy
Constraint Violation in QualifiedMinCountConstraintComponent:
    Source Shape: :ReviewedByShape
    Focus Node: ex:PaperA
    Result Path: ex:reviewedBy
\end{lstlisting}
    \caption{\textbf{Validation report} $\Gamma(\mathcal{G}^\prime,\mathcal{S})$, where $\mathcal{G}^\prime$ is derived from applying E1 to $\mathcal{{G}}$. $\mathcal{{G}}$ and $\mathcal{S}$ are defined in Fig.~\ref{fig:main_fig2}.}
    \label{lst:val_report}
    \vspace{-1em}
\end{figure}

\section{Prototype LLM-Based Repair Systems}
\label{sec:prompt}

To demonstrate our evaluation framework, we develop a family of LLM-based KG repair systems, each utilizing a different large language model and prompting strategy. Our framework supports a fine-grained analysis of these systems, as described in Sec.~\ref{sec:results}.

Each system takes as input the invalid graph $\mathcal{G}^\prime$, the manifest $\mathcal{S}$, and the validation report $\Gamma(\mathcal{G}^\prime, \mathcal{S})$. Given a validation result in $\Gamma(\mathcal{G}^\prime, \mathcal{S})$, the LLM is prompted to generate a repair. The prompting template for the repair systems has five sections (Fig.~\ref{fig:prompt}): Primer, Violation Context, SHACL Manifest Context ($\mathcal{C}_m$), KG Context ($\mathcal{C}_g$), and Instructions. 
We define three configurations for both $\mathcal{C}_m$ and $\mathcal{C}_g$, yielding nine repair system variations summarized in Table~\ref{tab:label}.
%These 9 variations are summarized in Table~\ref{tab:label}.

% %To demonstrate the usefulness of the evaluation framework, 
% We propose a family of prototype LLM-based repair systems that take as input
% the broken graph $\mathcal{G}^\prime(\mathcal{V}^\prime, \mathcal{E}^\prime)$, the manifest $\mathcal{S}$, and the validation report $\Gamma(\mathcal{G}^\prime, \mathcal{S})$. This system consists of only an off-the-shelf LLM. Note that this preliminary repair system only demonstrates the LLM's ability to understand the validation results and to leverage the semantics of the naming patterns in the knowledge graph to suggest possible repairs.
% Future work will examine LLM's ability to produce accurate facts to correct the violations in $\mathcal{G}^\prime$.%, for example by using an external knowledge base.
For the Manifest Context $\mathcal{C}_m$, we define three variations with respect to a validation result $\rho^i_{s,v}$ with source shape $\Sigma_s$ and focus $v$. (i) $M$: The entire Turtle-serialized~\cite{turtle} manifest $\mathcal{S}$, providing the most detailed context but may exceed the LLM context window. (ii) $S$: A subset of $\mathcal{S}$ containing only the source shape and its dependent shapes to improve scalability by reducing context size while retaining essential validation information. (iii) $S_n$: Extends $S$ by adding natural language descriptions (e.g., \texttt{dcterms:description}, \texttt{rdfs:label}) for classes to asses the value of natural language context.

%The three variations of the Manifest Context $\mathcal{C}_m$ are labeled $M$, $S$, and $S_n$, each defined with respect to a validation result $\rho_{s,v}^i\in\Gamma(\mathcal{G}^\prime, \mathcal{S})$.
%$M$ is the entire Turtle-serialized~\cite{turtle} manifest; this is the baseline of including all information in the manifest, which can be larger than the available LLM context window. To address the scalability issue of $M$, we design $S$ as a subset of $M$ consisting of only the validation result's source shape and all of its dependent shapes since these are essential information for validation. Furthermore, to investigate the usefulness of natural language context, we create $S_n$ by extending $S$ with natural language descriptions of classes and shapes in the manifest, including values of predicates like \texttt{dcterms:description}, \texttt{rdfs:comment}, etc,
%\texttt{skos:definition}, and \texttt{rdfs:label} 
%for all objects associated with the \texttt{{\color{C2}sh:}class} predicate.

% The naive baseline of $\mathcal{C}_m$ and $\mathcal{C}_g$ includes the entire manifest and knowledge graph in the template, which we respectively label as $M$ and $G$. However, due to the limited context window and the ability to exploit the locality of SHACL violations, we introduce two additional variations of manifest and knowledge contexts by selectively retrieving relevant parts of $\mathcal{S}$ and $\mathcal{G}^\prime$.

\begin{figure*}[t!]
    \centering{\scriptsize
    \scalebox{1}{\fontsize{6.3}{7.56}\selectfont\begin{mybox}{Primer}
    You are an expert in repairing RDF graphs that violate SHACL shapes. Output a SPARQL operation that fixes the violation.
    \end{mybox}}
    \scalebox{1}{\fontsize{6.3}{7.56}\selectfont\begin{mybox}{Violation Context}
    Focus Node: \texttt{<urn:bldg/ahu> }\\
    Violated source SHACL shape:\\
    \texttt{[] {\color{C2}sh:}path {\color{C1}brick:}feeds; {\color{C2}sh:}qualifiedMinCount 1;\\
    \hspace*{2em}    {\color{C2}sh:}qualifiedValueShape [ {\color{C2}sh:}node <urn:my\_site\_constraints/co2\_zone> ].}\\
    Reason: qualifiedMinCount is violated
    \end{mybox}}

    \scalebox{1}{\fontsize{6.3}{7.56}\selectfont\begin{mybox}{SHACL Manifest Context}
    SHACL shapes graph:\\
    \texttt{<urn:my\_site\_constraints/co2\_zone> a {\color{C2}sh:}NodeShape;\\
    %\hspace*{1em}    {\color{C2}sh:}class {\color{C1}brick:}HVAC\_Zone ;\\
    \hspace*{1em}    {\color{C2}sh:}property [ {\color{C2}sh:}path {\color{C1}brick:}hasPoint; ... <omit> ...}    
    \end{mybox}}
    
    \scalebox{1}{\fontsize{6.3}{7.56}\selectfont\begin{mybox}{Knowledge Graph Context}
    RDF knowledge graph:\\
    \texttt{<urn:bldg/ahu> a {\color{C1}brick:}Air\_Handling\_Unit;\\
    %\hspace*{2em}    {\color{C1}brick:}feeds <urn:bldg/zone1> ;\\
    \hspace*{2em}    {\color{C1}brick:}hasPoint <urn:bldg/outside\_co2\_sensor>. ... <omit> ...}
    \end{mybox}}
    
    \scalebox{1}{\fontsize{6.3}{7.56}\selectfont\begin{mybox}{Instructions}
    Use "INSERT DATA \{ \}", "DELETE DATA \{ \}", or "DELETE \{ \} INSERT \{ \} WHERE \{ \}" to fix the violation with minimal change that is contextually appropriate. Invent placeholder names or remove existing instances only if necessary. Do not use nested curly brackets.
    
    Respond only with valid json format using key "answer" without explanations. For example, \{"answer":"INSERT DATA {...}"\} or \{"answer":"DELETE DATA \{...\}"\} or \{"answer":"DELETE \{...\} INSERT \{...\} WHERE \{...\}"\}.
    
    \end{mybox}}}
    \caption{\textbf{Prompt template for our LLM-based KG repair systems}}
    \label{fig:prompt}
    \vspace{-1em}
\end{figure*}

%For a given validation result, let the source shape be $s$, the source constraint be $\kappa_s^i = (\xi_s^i, \omega_s^i)$, and the focus be $f$. We define a variation of manifest context as the union of $\kappa_s^i$ and the dependency of $\Sigma_{\omega_s^i}$ if $\omega_s^i$ is a shape name; otherwise, it is just $\kappa_s^i$. We label this manifest context as $S$. Compared with $M$, $S$ includes only relevant parts of $\Sigma_s$ and excludes constraints in $\Sigma_s$ that are satisfied by $f$.
%Additionally, we define another refined SHACL manifest context $\mathcal{C}_m^2$ that also includes the natural language descriptions of the definitions of a class in the ontology. This is achieved by taking the union of $\mathcal{C}_m^1$ with the triples with \texttt{dcterms:description}, \texttt{rdfs:comment},  \texttt{skos:definition}, and \texttt{rdfs:labels} for all objects of the \texttt{sh:class} predicate.
% Finally, we define the last variation of manifest context, which includes natural language descriptions of classes in the ontology. This is achieved by adding additional triples that contain descriptions to the manifest context $S$. We denote it as $S_n$. Specifically, we include triples with properties like \texttt{dcterms:description}, \texttt{rdfs:comment}, \texttt{skos:definition}, and \texttt{rdfs:label} for all objects associated with the \texttt{{\color{C2}sh:}class} predicate.

For the KG context $\mathcal{C}_g$, we define three variations with respect to a validation result $\rho^i_{s,v}$ with source shape $\Sigma_s$, focus $v$, and the chosen manifest context $\mathcal{C}_m$. (i) $G$: The entire Turtle-serialized KG $\mathcal{G}$, providing the most detailed context but may exceed the LLM context window. (ii) $F(\mathcal{C}_m)$: A subset of $\mathcal{G}$, including only triples involved in validating the focus against the source shape to produce $\rho^i_{s, v}$. If the source constraint is \texttt{{\color{C2}sh:}qualifiedMinCount} and $\Sigma_{s^\prime}$ is the object of the corresponding \texttt{{\color{C2}sh:}qualifiedValueShape}, we include triples from validating nodes that satisfy $\Sigma_{s^\prime}$. Essentially, $F(\mathcal{C}_m)$ removes information in $G$ that is not used in the validation to improve scalability. (iii) $F(\mathcal{C}_m)^+$: Extends $F(\mathcal{C}_m)$ with a positive example by including triples from validating a node that satisfies the source shape. This serves as additional context to show how $\Sigma_s$ is satisfied.

\noindent \textbf{Running Example.} Assume we obtain $\mathcal{G}^\prime$ by applying E1, resulting in the validation result $\rho_{\texttt{:ReviewedByShape}, \texttt{{\color{C5}ex:}PaperABC}}^1$. We show $S$ and $F(S)$ in Fig.~\ref{fig:prompt_run_ex}. Note that in Fig.~\ref{fig:prompt_subfig2}, \texttt{({\color{C5}ex:}PaperABC,{\color{C5}ex:}author,{\color{C5}ex:}Ethan)} is not included because the triple was not used in validating \texttt{{\color{C5}ex:}PaperABC} against $\Sigma_{\texttt{:ReviewedByShape}}$, as defined in Fig.~\ref{fig:prompt_subfig1}. On the other hand, \texttt{{\color{C5}ex:}Bob} and \texttt{{\color{C5}ex:}Dan} satisfy \texttt{:ReviewerShape}, so the relevant triples are included.

\section{Evaluation Metrics}
\label{sec:metrics}
We define four tiered metrics to evaluate the quality of the generated repair.
Each tier is assessed only if all the lower-tier metrics are satisfied. Additionally, we describe how the cost of generating a repair is measured.

The metrics are listed below in order of increasing stringency.
% from the most fundamental to the most difficult.
We start with \textbf{syntactic validity}: A repair $\pi$ should be parsed without error. Therefore, $\pi$ is syntactically valid if it adheres to correct SPARQL syntax. As a second metric, we consider \textbf{semantic validity}: After a syntactically correct $\pi$ is applied to the invalid graph $\mathcal{G}^\prime$ to obtain $\mathcal{G}^\prime_\pi$, the violation should be eliminated. Thus, $\pi$ is semantically valid if $\mathcal{G}^\prime_\pi \models \mathcal{S}$. We then evaluate a property which we term \textbf{relaxed isomorphism}: A semantically valid $\pi$ should ideally revert the VIO to recover the original KG $\mathcal{G}$. However, inferring the exact string from the provided context can be challenging without access to an external knowledge base. Therefore, we relax the traditional definition of isomorphism for RDF graphs~\cite{hogan2015skolemising} to allow non-exact string matches for the literals. If $\mathcal{G}^\prime_\pi$ and $\mathcal{G}$ are isomorphic when all literals are replaced with a placeholder string \texttt{placeholder}, then $\pi$ is relaxed-isomorphic.
Finally, a relaxed-isomorphic $\pi$ is further tested for exact \textbf{isomorphism}. If $\mathcal{G}^\prime_\pi$ and $\mathcal{G}$ are exactly isomorphic, then $\pi$ is isomorphic.

Recall that different variations of manifest and KG contexts include varying levels of details and consequently, different token counts. Moreover, commercial LLMs have distinct pricing models (see App.~\ref{app:model_pricing} for details). We choose the following LLMs: (i) GPT4o (OpenAI), (ii) Claude 3.0 Opus (Anthropic), (iii) Gemini 1.5 Pro (Google), and (iv) Llama 3.1 405B (Meta, via AWS Bedrock). For each combination of manifest and KG context and LLM, we evaluate the token counts and cost of generating $\pi$. All LLMs were accessed in January 2025.

\begin{figure}[t]
\centering
%\addtolength{\leftskip} {-1.3cm}
%\addtolength{\rightskip}{-1.3cm}

\begin{subfigure}[t]{0.52\textwidth}
\centering
\begin{lstlisting}
:ReviewedByShape a sh:PropertyShape;
 sh:path ex:reviewedBy;
 sh:qualifiedValueShape :ReviewerShape;
 sh:qualifiedMinCount 1.
:ReviewerShape a sh:NodeShape;
 sh:class ex:Professor, 
          ex:CommitteeMember.
 \end{lstlisting}
\caption{Fig.~\ref{fig:prompt_run_ex} (a) {$S$ for the running example}}
\label{fig:prompt_subfig1}
\end{subfigure}
\hspace{0.01\textwidth} % Adjust the space between the subfigures as needed
\begin{subfigure}[t]{0.45\textwidth}
\centering
\begin{lstlisting}
ex:PaperABC a ex:Paper;
 ex:reviewedBy ex:Alice, ex:Clark.
ex:Alice a ex:Professor.
ex:Bob a ex:Professor, 
         ex:ComitteeMember.
ex:Clark a ex:Student.
ex:Dan a ex:Professor, 
         ex:ComitteeMember.\end{lstlisting}
\caption{Fig.~\ref{fig:prompt_run_ex} (b) $F(S)$ for the running example}
\label{fig:prompt_subfig2}
\end{subfigure}
\phantomcaption
\label{fig:prompt_run_ex}
\vspace{-1em}
\end{figure}

\section{Evaluation Results}
\label{sec:results}

%The statistics of each dataset is in Table \ref{tab:amp_factor}.

%\subsection{Datasets}\label{sec:datasets}

\begin{comment}
\begin{table}[t!]
    \centering
    \caption{\textbf{Statistics of Datasets.} The $\mathcal{G}$ and $\mathcal{S}$ sizes of Brick are averaged across eight different knowledge graphs and manifests. }
    \label{tab:amp_factor}
    \scalebox{0.9}{\scriptsize
    \begin{tabular}{ccccccc}\toprule
          & $\mathcal{G}$ size & $\mathcal{S}$ size & total&  \multicolumn{3}{c}{$\alpha$}   \\
          & (\#triples) & (\#triples) & test cases& mean & min & max\\\midrule
     Brick &35& 108 & 144 & 3.23& 1&13\\
     LUBM & 229 & 207 &70 &2.51 & 1&14\\
     QUDT & 81,293 & 2,642&134& 1.5& 1&26\\\bottomrule
    \end{tabular}}
\end{table}
\end{comment}
\begin{sloppypar}
We test our evaluation framework and repair systems on three datasets.
Each consists of KGs with corresponding SHACL manifests.
\textbf{Brick Ontology}~\cite{balaji2016brick} standardizes semantic descriptions of the physical, logical, and virtual assets in buildings and their relationships. Using the 1.3 release, we construct eight manifests based on HVAC hardware specifications from ASHRAE Guideline 36~\cite{ashrae201836} and manually create one KG for each system. 
\textbf{LUBM Ontology}~\cite{guo2005lubm} is a synthetic graph modeling a university.
We adapt the KG and manifest from Figuera et al.~\cite{figuera2021trav,tracedSPARQL} by adding triples to create a valid graph.
\textbf{QUDT Ontology} \cite{qudt} is a unified conceptual representation of quantities, quantity kinds, units, and related concepts. Using the 2.1 release, this dataset is two to three orders of magnitude larger than the others, and serves to test the scalability of the repair process. Descriptive statistics for each data set are in Table~\ref{tab:amp_factor}.
%of much larger scale than the previous two so that it tests the scalability of the different prompting strategies.
\end{sloppypar}

\begin{figure}[t]
    \centering
    \addtolength{\leftskip}{-2cm}
    \addtolength{\rightskip}{-2cm}    
    \includegraphics[width=1.\linewidth]{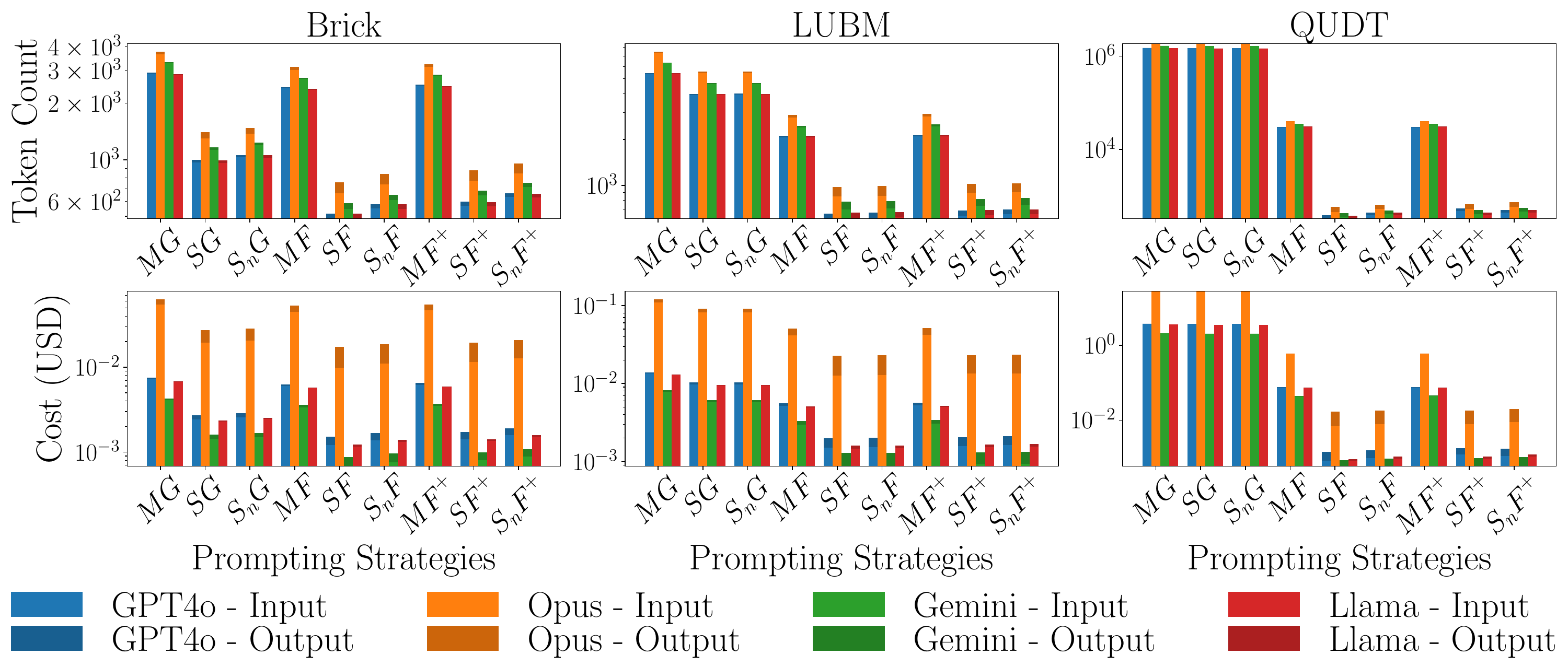}
    \caption{\textbf{Average token count and cost for different prompting strategies and LLMs.} }
    \label{fig:price}
    \vspace{-1em}
\end{figure}
\subsection{Evaluation of Prototype Repair System}
Given a validation report of an invalid graph, $\Gamma(\mathcal{G}^\prime, \mathcal{S})$, we randomly choose one validation result to construct different prompting strategies as described in Sec.~\ref{sec:prompt}. If multiple VIO expressions are applied to the same graph, a more sophisticated mechanism would be required to resolve interdependencies between validation results; we leave this for future work.
%However, if multiple VIO expressions are applied to the same graph, a more advanced mechanism is needed to resolve the inter-dependencies between validation results than random selection; we leave this for future work. %The statistics of the amplification factors of each dataset is shown in Table \ref{tab:amp_factor}.

With three variations each for the manifest and KG contexts, we obtain nine distinct prompting strategies. For simplicity, we omit the explicit dependency of the KG context on the manifest context in our notation (e.g., $MF$ denotes $MF(M)$). We apply these nine strategies across the four LLMs described in Sec.~\ref{sec:metrics} and three datasets, evaluating both cost and repair quality.

\subsubsection{Cost of generating the repair.}
%In Fig.\ref{fig:price}, we present the average input and output token counts and their associated prices. Due to limited budget, if the input price costs more than \$0.5 on average, we don't query the LLM for a response. Therefore, the output token and price are not drawn in the figure for those excluded runs.

Fig.~\ref{fig:price} presents the average input and output token counts and associated costs. Due to budget constraints, we do not query the LLM for a response if the average input cost exceeds \$0.5; thus, output token counts and costs for these cases are excluded from the figure.

Fig.~\ref{fig:price} shows that using the entire manifest and KG as context ($MG$) results in the highest input token count and the highest expense across all datasets. The inclusion of natural language descriptions for classes ($S_n$ versus $S$) increases input tokens only minimally for LUBM, as its ontology contains few such definitions. We also observe that Claude 3.0 Opus has the most granular tokenization, leading to the highest token counts and costs.

\subsubsection{Quality of the repair.}
%With the proposed evaluation framework, we can study the average effect of using different variations of contexts and LLMs. 
We evaluate the four LLMs on three datasets using the nine prompting strategies. Table~\ref{tab:avg} reports the percentage of test cases passing each evaluation metric, averaged across datasets (see App.~\ref{app:heatmap} for detailed results).  %We evaluate the performance of all four LLMs across three datasets using nine prompting strategies.
%We calculate the percentage of test cases that satisfy each of the four evaluation metrics for each dataset and each prompting strategy.
%For each LLM-strategy combination, we calculate the percentage of total test cases that evaluate to true for the four evaluation metrics within each dataset.
\textbf{Syntactic validity} is nearly 100\%, with errors mainly due to missing or extra closing brackets or quotations, indicating that LLMs handle SPARQL syntax well. However, \textbf{isomorphism} scores the lowest; we hypothesize that giving the repair system access to external knowledge bases could improve this, which we leave for future work.

We next assess the effects of different manifest contexts, KG contexts, and LLM choices on repair quality.

\textit{Manifest Context.}
We investigate the following questions: (i) Does using $M$ produce a different result from $S$? (ii) Does using $S$ produce a different result from $S_n$? (iii) Does using $M$ produce a different result from $S_n$? To answer these, we collect relevant results and fit a linear mixed-effects model (LMM)~\cite{pinheiro2000linear} to estimate the effect of using different manifest contexts. We use LMM because our measurements of nine prompting strategies on three datasets using four LLMs violate the independence assumptions required for ordinary least squares linear regression~\cite{zdaniuk2024ordinary}; for example, measurements from the same LLM are correlated. Further details on LMM can be found in App.~\ref{app:stats}.

%, to estimate the effect of using different manifest contexts since our data distribution violates the independence assumptions in ordinary least squares~\cite{zdaniuk2024ordinary}. For example, scores resulted from the same LLM are correlated.

%We investigate the following questions: (i) Does using $M$ produce a different result from $S$? (ii) Does using $S$ produce a different result from $S_n$? (iii) Does using $M$ produce a different result from $S_n$? To answer these, we collect relevant results and fit a linear mixed-effects model~\cite{pinheiro2000linear}, treating the manifest context ($M$, $S$, or $S_n$) as a fixed effect, with random intercepts for each dataset, LLM, and metric to account for repeated measures and correlation since this data distribution violates the independence assumptions in ordinary least squares~\cite{zdaniuk2024ordinary}. In linear mixed-effects models, one category of the fixed-effect variable serves as the reference, and the estimated effects reflect differences relative to this baseline, controlling for random variation. 
%See App.~\ref{app:stats} for more details.

\begin{table}[t]
\centering
{\scriptsize
\caption{\textbf{Average scores for all LLMs and prompting strategies.}}
\label{tab:avg}
\begin{tabularx}{1.\linewidth}{cYYYY}
\toprule
     &  Syntactic Validity&  Semantic Validity&  Relaxed Isomorphism&  Isomorphism  \\\midrule
    Average &$97.35\%$ & $86.17\%$& $54.07\%$ & $49.94\%$\\\bottomrule
\end{tabularx}}
\end{table}
\begin{figure}[t]
    \centering
    \addtolength{\leftskip}{-1cm}
    \addtolength{\rightskip}{-1cm}    
    \includegraphics[width=1.\linewidth]{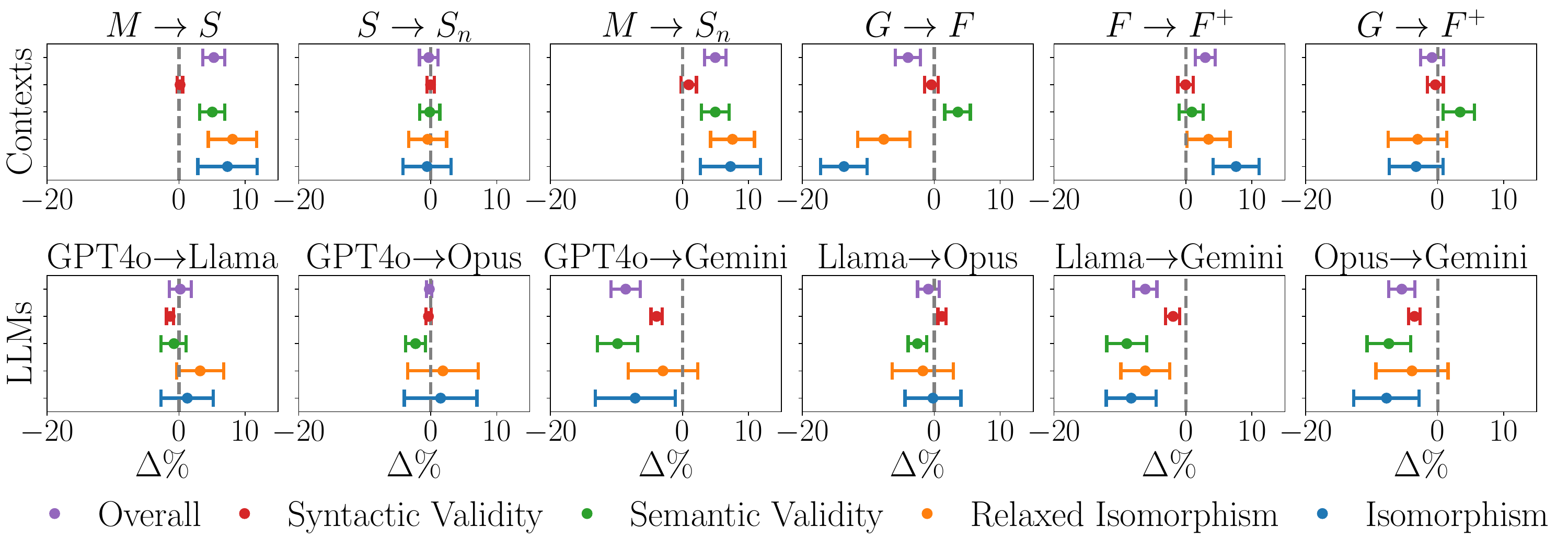}
    \caption{\textbf{Estimated effects of context selection and LLM selection.}}
    \label{fig:lmm}
    \vspace{-1em}
\end{figure}

Fig.~\ref{fig:lmm} shows the average effect (percentage change, $\Delta\%$) across all metrics and individually, with 95\% confidence intervals. Metrics not crossing the 0\% line are statistically significant ($p$-value < 0.05). The results indicate that switching from $M$ to $S$ improves all metrics except \textbf{syntactic validity}, which is already near 100\%.
This suggests that providing the entire manifest ($M$) overloads and distracts the LLM, while focusing on essential validation information ($S$) leads to better repairs.
%after extracting dependencies from the manifest, the remaining content does not help generate repairs, and fewer input tokens help the LLM focus on essential information. 
Switching from $S$ to $S_n$ has no significant effect. Changing from $M$ to $S_n$ yields similar results as changing from $M$ to $S$. Overall, concise prompts perform best; $S$ is preferred since it requires fewer tokens than $S_n$ and yields reliable results.

\textit{KG Context.} We also analyze the impact of different KG contexts. As shown in Fig.~\ref{fig:lmm}, switching from $G$ to $F$ negatively affects overall metrics, especially \textbf{relaxed isomorphism} and \textbf{isomorphism}, but improves \textbf{semantic validity}. This indicates that while removing extraneous information helps the LLM focus on eliminating violations (thus improving semantic validity), it also deprives the model of valuable context needed to generate repairs that closely match the original KG. In contrast to the manifest context, where reducing information ($M \to S$) is beneficial, reducing KG context ($G \to F$) harms the stricter metrics. This suggests that information not directly used in validation still provides useful context for the LLM to infer the most appropriate repair. 

Adding a positive example ($F \to F^+$) improves overall performance, particularly for \textbf{isomorphism}. Recall that $F^+$ augments $F$ with a positive example from $\mathcal{G}$ that is not involved in validating the focus node. Changing from $G$ to $F^+$ only improves \textbf{semantic validity}, with little effect on other metrics. Notably, compared to $G \to F$, the negative impact on \textbf{relaxed isomorphism} and \textbf{isomorphism} is much smaller for $G \to F^+$. We hypothesize that including more context about the focus node could further improve performance beyond $G$. Overall, $F^+$ offers a good balance between cost, scalability, and performance.

\textit{LLM Selection.} We also assess which LLM yields the highest quality repairs. Fig.~\ref{fig:lmm} shows that switching from GPT-4o, Llama 3.1 405B, or Claude 3.0 Opus to Gemini 1.5 Pro significantly reduces performance. However, there is no significant difference among GPT-4o, Llama 3.1 405B, and Claude 3.0 Opus. Considering cost, GPT-4o and Llama 3.1 405B are preferable, as Claude 3.0 Opus incurs higher costs due to its granular tokenization.

In summary, our analysis shows that the best repair results are achieved by including only the manifest and KG information used to validate the focus node $v$ against the source shape $\Sigma_s$, plus a positive example for additional context. This fine-grained insight is uniquely enabled by our evaluation framework.

\section{Conclusion}
\label{sec:conclusion}

This paper presents a systematic evaluation framework and metrics for repairing knowledge graphs using SHACL manifests. We assess the semantic and contextual capabilities of four commercial LLMs across nine prompting strategies, yielding three key insights: (i) For manifest context, including the source shape and its dependencies is essential for effective repairs; (ii) For KG context, restricting context to only the focus node and its dependencies reduces repair quality, but adding a positive example preserves performance; and (iii) Among the LLMs tested, three yield similar performances, whereas GPT4o and Llama 3.1 405B offer advantages in cost. Our framework enables fine-grained analysis of repair systems, supporting the automated generation and maintenance of high-quality knowledge graphs for downstream applications.
%In this paper, we leverage the semantic and contextual capabilities of LLMs to perform KG repair with SHACL manifests. We present a framework and metrics to systematically evaluate repairing performance, and use our framework to test four commercial LLMs with nine different prompting templates. Our results reveal three main findings: (i) The source shape and its dependency are crucial details to include in the prompt. (ii) Restricting context solely to the focus node and its dependency adversely affects performance; however, adding one positive example can maintain performance while improving scalability. (iii) Of the four evaluated LLMs, three yield similar performances, whereas GPT4o and Llama 3.1 405B offer advantages in cost. These findings suggest the utility of our proposed evaluation framework in aiding the design of a repair system. The framework is essential to the automatic generation and maintenance of KGs that can help support high-quality downstream applications.%unlock applications of advanced data analytics, smart controls, and AI for buildings and energy systems. 

%\paragraph*{Supplemental Material Statement:} We provide the appendix and the generated test cases at this anonymous \href{https://figshare.com/s/954a3295840ba2ddf72b}{link}.
%
% ---- Bibliography ----
%
% BibTeX users should specify bibliography style 'splncs04'.
% References will then be sorted and formatted in the correct style.
%
\bibliographystyle{splncs04}
\bibliography{references}
\newpage
\appendix

\section{VIO Definition for {\color{C2}sh:}qualifiedMaxCount}
\label{app:qMaxCount}
%Following the discussion in Sec.~\ref{sec:ars} on the rewrite rule of \texttt{VIO($\kappa_s^i, \mathcal{F}$)}, where $\mathcal{F}$ is a set of focuses and $\kappa_s^i=(\xi_s^i, \omega_s^i)$ is the $i$-th pair of constraint of shape $\Sigma_s$ and its parameter value. Recall that $I_s$ is the index set of constraints of $\Sigma_s$.

%When \texttt{$\xi_s^i=$ {\color{C2}sh:}qualifiedValueShape} and there exists $h\in I_s$ such that $\xi_s^h=$\texttt{{\color{C2}sh:}qualifiedMaxCount}, violating \texttt{{\color{C2}sh:}qualifiedMaxCount} requires that all constraints in the dependency of $\Sigma_{\omega_s^i}$ be satisfied. Therefore, no rewrite rule is required. Instead, we define as follows the SPARQL operation for \texttt{VIO($\kappa_s^i, \mathcal{F}$)}, when \texttt{$\xi_s^i=$ {\color{C2}sh:}qualifiedValueShape} and there exists $h\in I_s$ such that $\xi_s^h=$\texttt{{\color{C2}sh:}qualifiedMaxCount}.

Following the discussion in Sec.~\ref{sec:ars} on the rewrite rule for \texttt{VIO($\kappa_s^i, \mathcal{F}$)}, where $\mathcal{F}$ is a set of focus nodes and $\kappa_s^i=(\xi_s^i, \omega_s^i)$ denotes the $i$-th constraint-parameter pair for shape $\Sigma_s$, with $\omega_s^i$ being a shape name and $I_s$ the index set of constraints for $\Sigma_s$). Consider the case where \texttt{$\xi_s^i=$ {\color{C2}sh:}qualifiedValueShape}. If there exists $h\in I_s$ such that $\xi_s^h=$\texttt{{\color{C2}sh:}qualifiedMaxCount}, then a violation of \texttt{{\color{C2}sh:}qualifiedMaxCount} requires that all constraints in the dependency of $\Sigma_{\omega_s^i}$ be satisfied. Thus, no rewrite rule is needed. Instead, we directly define the SPARQL operation for \texttt{VIO($\kappa_s^i, \mathcal{F}$)} when \texttt{$\xi_s^i=$ {\color{C2}sh:}qualifiedValueShape} and \texttt{{\color{C2}sh:}qualifiedMaxCount} constraint exists.

For a focus node $f\in\mathcal{F}$, we define $\psi_{\omega_s^i}(f)=\{ v\in\mu_s(f) \mid v\models \Sigma_{\omega_s^i} \}$, a value mapping function that is stricter than the ordinary value mapping function defined in Sec.~\ref{sec:prelim} $\mu_s$, in the sense that $\psi_{\omega_s^i}(f)$ only admits nodes $v\in\mu_s(f)$ that satisfy $\Sigma_{\omega_s^i}$. We must make edits such that $|\psi_{\omega_s^i}(f)|$ becomes strictly larger than $\omega_s^h$, the positive integer parameter value of \texttt{{\color{C2}sh:}qualifiedMaxCount}. We need to construct a set $\epsilon_f^M\subseteq \mathcal{V}$, where $\forall \ v\in\epsilon_f^M$, $v\models \Sigma_{\omega_s^i}$ and $v\notin\psi_{\omega_s^i}(f)$. Furthermore, $|\epsilon_f^M|=\omega_s^h-|\psi_{\omega_s^i}(f)|+1$. We perform class-aware subgraph monomorphism search in $\mathcal{G}$ following Fierro et al.~\cite{fierro2022application} for nodes that satisfy $\Sigma_{\omega_s^i}$. If only partial matches of $\Sigma_{\omega_s^i}$ are identified, we prompt an LLM to mint new entity names (see App.~\ref{app:qMCount} for details). We then define the SPARQL operation: \texttt{add(($f$,$p_s$,$v$))} for all $v\in\epsilon_f^M$.

\section{Subgraph Monomorphism Search for $\Sigma_{\omega^i_s}$}\label{app:qMCount}

The subgraph monomorphism search proposed by Fierro et al.~\cite{fierro2022application} first converts $\Sigma_{\omega^i_s}$ to a graph $\mathcal{G}_{\omega^i_s}$ then retrieves the largest subgraphs in $\mathcal{G}$ that are monomorphic to subgraphs of $\mathcal{G}_{\omega^i_s}$. Each retrieved subgraph corresponds to a part of $\mathcal{G}$ that satisfies a maximum number of constraints in $\Sigma_{\omega^i_s}$. For a retrieved subgraph $\bar{\mathcal{G}}(\bar{\mathcal{V}}, \bar{\mathcal{E}})\subseteq\mathcal{G}(\mathcal{V}, \mathcal{E})$ that is fully monomorphic to $\mathcal{G}_{\omega^i_s}$, we identify the node $\bar{v}\in\bar{\mathcal{V}}$ such that $\bar{v}\models\Sigma_{\omega^i_s}$ and add it to $\epsilon^M_f$ if it is not already in $\psi_{\omega^i_s}(f)$. However, if the retrieved subgraph is only monomorphic to a subgraph of $\mathcal{G}_{\omega^i_s}$, we prompt an LLM to mint new entities so that the identified $\bar{v}\in\bar{\mathcal{V}}$ satisfies $\Sigma_{\omega^i_s}$. 

\sloppypar
\noindent\textbf{Running Example.} Consider 
\texttt{VIO((:ReviewedByShape, {\color{C2}sh:}qualifiedValueShape, \allowbreak :ReviewerShape;{\color{C2}sh:}qualifiedMaxCount,3), \allowbreak \{{\color{C5}ex:}PaperABC\})}, where we have $\psi_{\texttt{:ReviewerShape}}(\texttt{{\color{C5}ex:}PaperABC})=\{\texttt{{\color{C5}ex:}Alice}, \texttt{{\color{C5}ex:}Bob}\}$. We must construct a set $\epsilon^M_{\texttt{{\color{C5}ex:}PaperABC}}$ of size $3-2+1=2$. 
The subgraph monomorphism search retrieves  a node \texttt{{\color{C5}ex:}Dan}$\in\mathcal{V}$ that satisfies \texttt{:ReviewerShape} and is not already in $\{\texttt{{\color{C5}ex:}Alice}, \texttt{{\color{C5}ex:}Bob}\}$. We still need one more node that satisfies \texttt{:ReviewerShape} but the only existing subgraphs in $\mathcal{G}$ are
\begin{enumerate}[noitemsep,topsep=0pt, left=0pt]
    \item \texttt{({\color{C5}ex:}Alice, {\color{C0}a}, {\color{C5}ex:}Professor, {\color{C5}ex:}CommitteeMember)}, 
    \item \texttt{({\color{C5}ex:}Bob, {\color{C0}a}, {\color{C5}ex:}Professor, {\color{C5}ex:}CommitteeMember)}, 
    \item \texttt{({\color{C5}ex:}Dan, {\color{C0}a}, {\color{C5}ex:}Professor, {\color{C5}ex:}CommitteeMember)}
\end{enumerate}
that is monomorphic to $\mathcal{G}_{\texttt{:ReviewerShape}}$. Therefore, we prompt an LLM as in Fig.~\ref{fig:qual_max_prompt} to mint a fresh entity name. 
\begin{figure*}[t!]
    \centering{\scriptsize
    \scalebox{1}{\begin{myboxNoTitle}
    
    Generate an entity name to replace \texttt{urn:\_\_\_param\_\_\_\#name} in the following graph. 
    
    \texttt{urn:\_\_\_param\_\_\_\#name a {\color{C5}ex:}Professor, {\color{C5}ex:}ComitteeMember.}
    
    Your answer must be semantically similar to the corresponding entity in the following example but not exactly identical.
    
    \texttt{{\color{C5}ex:}Alice a {\color{C5}ex:}Professor, {\color{C5}ex:}ComitteeMember.}
    
    Compare with the example, observe if \texttt{urn:\_\_\_param\_\_\_\#name} should be a URIRef. If so, make it a valid URIRef. Otherwise, make it a Literal.
    Return your answer in json without explanations. \{``answer'':your answer, ``is URIRef'':true/false\}.
    \end{myboxNoTitle}}}
    \caption{\textbf{Example Prompt for {\color{C2}sh:}qualifiedMaxCount}}
    \label{fig:qual_max_prompt}
\end{figure*}

\section{Proof of Strong Normalization}\label{app:strong}
This section proves Theorem~\ref{thm:1}: if the SHACL manifest $\mathcal{S}$ and knowledge graph $\mathcal{G}$ are finite, and there are no recursive shapes~\cite{corman2018semantics}, then the rewriting system defined in Sec.~\ref{sec:ars} is strongly normalizing; that is, every rewriting sequence terminates. Following Corman et al.~\cite{corman2018semantics} that a shape is recursive if it refers to itself; specifically, a shape $\Sigma_s$ is recursive if and only if its dependency contains $\Sigma_s$.

We begin with an informal intuitive argument, followed by a more formal proof.

Since $\mathcal{S}$ is finite and contains no recursive shapes, the shape dependencies in $\mathcal{S}$ form a finite directed acyclic graph (DAG). When rewrite rules are applied, they correspond to traversals within this DAG: rule 1 moves to a child shape, while rule 2 remains at the current shape. Because both $\mathcal{G}$ and $\mathcal{S}$ are finite, only finitely many applications of rule 2 can occur before the next time rule 1 is applied. As every path in a finite DAG is of finite length, every traversal sequence must terminate. Thus, the rewriting process is strongly normalizing.

To formalize the above argument, we define a shape sequence starting with a shape $\Sigma_{s_1}$ as:
\[
\sigma(\Sigma_{s_1}) ::= \Sigma_{s_1}, \Sigma_{s_2}, \cdots, \Sigma_{s_k}
\]
where $\Sigma_{s_{i+1}}$ follows $\Sigma_{s_i}$ iff there exists $j \in I_{s_i}$ such that $\omega_{s_i}^j = s_{i+1}$\footnote{Recall that $I_{s_i}$ denotes the index set of constraints of shape $\Sigma_{s_i}$ and that the $j$-th constraint $\kappa_{s_i}^j=(\xi_{s_i}^j, \omega_{s_i}^j)$ refers to another shape $\Sigma_{s_{i+1}}$ if $\omega_{s_i}^j = s_{i+1}$.}. If no such $j$ exists, then $\Sigma_{s_i}$ is the last element of the sequence. We denote the length of $\sigma(\Sigma_s)$ as $\text{len}(\sigma(\Sigma_s))$.

Assuming $\mathcal{S}$ is finite and contains no recursive shapes, the shape dependency structure forms a directed acyclic graph (DAG). A shape corresponds to a node in the DAG, and there is a directed edge from shape $\Sigma_{s_i}$ to shape $\Sigma_{s_{i+1}}$ whenever there is a constraint $\kappa^j_{s_i}=(\xi_{s_i}^j, \omega_{s_i}^j)$ such that $\omega_{s_i}^j=s_{i+1}$. Furthermore, there are no recursive shapes; therefore, the directed graph is acyclic.
By the properties of DAGs:
\begin{enumerate}
    \item All paths are of finite length.
    \item There are finitely many distinct finite paths.
\end{enumerate}

Let $\Xi(\Sigma_s)$ denote the set of all shape sequences starting with $\Sigma_s$. By the DAG properties, $\Xi(\Sigma_s)$ is finite and contains only finite sequences. To prove the strong normalization property of $\mathcal{A}$, we define a decreasing complexity measure $\chi$ \revwayne{for $i\in I_s$ to show the progress after each rewrite} :
\begin{equation}
\label{eq:chi}
\chi(\texttt{VIO}(\kappa_s^i, \mathcal{F})) = \max_{\sigma(\Sigma_s) \in \Xi(\Sigma_s)} \text{len}(\sigma(\Sigma_s)),    
\end{equation}
which measures the length of the longest shape sequence starting with $\Sigma_s$. For compound terms, $\chi$ is defined as:
\[
\chi(\texttt{VIO}(\kappa_{s_1}^i, \mathcal{F}_1) + \texttt{VIO}(\kappa_{s_2}^j, \mathcal{F}_2)) = \max\left\{ \chi(\texttt{VIO}(\kappa_{s_1}^i, \mathcal{F}_1)), \chi(\texttt{VIO}(\kappa_{s_2}^j, \mathcal{F}_2)) \right\}
\]
\[
\chi(\texttt{VIO}(\kappa_{s_1}^i, \mathcal{F}_1) \cdot \texttt{VIO}(\kappa_{s_2}^j, \mathcal{F}_2)) = \max\left\{ \chi(\texttt{VIO}(\kappa_{s_1}^i, \mathcal{F}_1)), \chi(\texttt{VIO}(\kappa_{s_2}^j, \mathcal{F}_2)) \right\}
\]
\revwayne{The rewrite terminates when $\chi(\cdot)=0$. This corresponds to the intuition that a path traversal in the DAG reaches the end and terminates.}
Then, we prove that $\chi$ is decreasing with the rewrite rules. Given \texttt{VIO($\kappa_s^i, \mathcal{F}$)}, where $\kappa_s^i=(\xi_s^i, \omega_s^i)$ is the $i$-th constraint-parameter pair of shape $\Sigma_s$. First, \textbf{Rule 1: $\xi_s^i=$ {\color{C2}sh:}node or $\xi_s^i=${\color{C2}sh:}property:}
\[
\texttt{VIO}(\kappa_s^i, \mathcal{F}) \to \texttt{VIO}(\kappa^1_{\omega_s^i}, \mathcal{F}') + \cdots + \texttt{VIO}(\kappa^{n_{\omega_s^i}}_{\omega_s^i}, \mathcal{F}').
\]
See Sec.~\ref{sec:ars} for the detailed definition of rewrite rule 1. Since $\mathcal{S}$ is finite by assumption, the sum on the right-hand side is finite. Additionally, elements in $\Xi(\Sigma_{\omega_s^i})$ are shape sequences with prefix $\Sigma_{\omega_s^i}, \cdots$, which are subsequences of elements in $\Xi(\Sigma_{s})$ with prefix $\Sigma_s, \Sigma_{\omega_s^i}, \cdots$. Therefore, according to eq.~\eqref{eq:chi}, $\chi(\cdot)$ is decreased by $1$ for every expression on the right hand side of rule 1. This corresponds to the intuition that rule 1 moves to the child shape in the path traversal of DAG.
Therefore, 
\[
\chi(\texttt{VIO}(\kappa_s^i, \mathcal{F})) - 1 = \chi\left(\sum_{j \in I_{\omega_s^i}} \texttt{VIO}(\kappa_{\omega_s^i}^j, \mathcal{F})\right)
\]
This implies that $\chi$ strictly decreases by 1 after applying rule 1. Next, we analyze \textbf{Rule 2: $\xi_s^i=$ {\color{C2}sh:}qualifiedValueShape:}
\[
\texttt{VIO}(\kappa_s^i, \mathcal{F}) \to \sum_{f \in \mathcal{F}} \sum_{\substack{\epsilon_f^m \subseteq \psi_{\omega_s^i}(f) \\ |\epsilon_f^m| = |\psi_{\omega_s^i}(f)| - \omega_s^h + 1}} \sum_{g \in \Phi(\epsilon_f^m)} \prod_{v \in \epsilon_f^m} \texttt{VIO}(\kappa^1_{g(v)}, \{f\}).
\]
See Sec.~\ref{sec:ars} for the detailed definition of rewrite rule 2. Recall that we introduce artificial shapes, $\Sigma_{v\_\text{node}}$ and $\Sigma_{f\_v\_\text{prop}}$ for $f\in\mathcal{F}$ and $v\in \epsilon^m_f$. See Sec.~\ref{sec:ars} for the definition of $\epsilon^m_f$. The targeting functions, value mapping functions, and the constraints of the artificial shapes are defined as follows.
\begin{itemize}[noitemsep, topsep=0pt, left=0pt, align=left]
    \item $\tau_{v\_\text{node}}(\mathcal{G})=\{v\}$, $\mu_{v\_\text{node}}(v)=\{v\}, \kappa_{v\_\text{node}}=\{(\texttt{{\color{C2}sh:}node}, \omega_s^i)\}$ and
    \item $\tau_{f\_v\_\text{prop}}(\mathcal{G})=\{f\}$, $\mu_{f\_v\_\text{prop}}(f)=\{v\}, \kappa_{f\_v\_\text{prop}}=\{(\texttt{{\color{C2}sh:}minCount}, 1)\}$.
\end{itemize}
Note that only $\kappa_{v\_\text{node}}$ is shape-based and $\kappa_{f\_v\_\text{prop}}$ is not. Moreover, $\kappa_{v\_\text{node}}=\{(\texttt{{\color{C2}sh:}node}, \omega_s^i)\}$ refers to the same shape, $\Sigma_{\omega_s^i}$, as the constraint $\kappa_s^i=(\xi_s^i, \omega_s^i)$ before rewriting. This corresponds to the intuition that rule 2 stays at the current shape in the path traversal of DAG. Lastly, $\kappa_{v\_\text{node}}=\{(\texttt{{\color{C2}sh:}node}, \omega_s^i)\}$ involves the predicate \texttt{{\color{C2}sh:}node}, which implies rule 1 will be applied following the application of rule 2.

Since $\mathcal{G}$ and $\mathcal{S}$ are finite and the number of artificial shapes introduced are bounded, rule 2 produces a finite sum of products. According to eq.~\eqref{eq:chi}, $\chi(\cdot)$ remains constant after the application of rule 2. Therefore,
\[
\chi\left(\texttt{VIO}(\kappa_s^i, \mathcal{F})\right) = \chi\left(\sum_{f \in \mathcal{F}} \sum_{\substack{\epsilon_f^m \subseteq \psi_{\omega_s^i}(f) \\ |\epsilon_f^m| = |\psi_{\omega_s^i}(f)| - \omega_s^h + 1}} \sum_{g \in \Phi(\epsilon_f^m)} \prod_{v \in \epsilon_f^m} \texttt{VIO}(\kappa^1_{g(v)}, \{f\})\right).
\]
Although applying rule 2 does not decrease $\chi$, the finiteness of $\mathcal{S}$ guarantees that only finitely many such steps occur. In addition, application of rule 1 follows the application of rule 2. In conclusion, given that rewrite rules either strictly decrease $\chi$ or keep it constant only for finitely many steps, the rewriting system is strongly normalizing. Every rewrite sequence terminates after finitely many steps. $\square$

\section{LLM Pricing}\label{app:model_pricing}
We accessed all the LLMs in January, 2025. The costs are listed in Table \ref{tab:llm_price}.
\begin{table}[h]
\centering
\caption{\textbf{Pricing Model.} Costs are in USD per 1 million tokens.}
\begin{tabular}{ccc}
\midrule
\textbf{Model} & \textbf{Input Cost} & \textbf{Output Cost} \\\hline
GPT4o & $\$2.5$ & $\$10$ \\
Claude 3.0 Opus & $\$15$ & $\$75$ \\
Gemini 1.5 Pro & $\$1.25$ & $\$5$ \\
Llama 3.1 405B & $\$2.4$ & $\$2.4$ \\\midrule
\end{tabular}
\label{tab:llm_price}
\end{table}
\section{Detailed Results}
We evaluated four LLMs on three datasets using nine prompting strategies. Fig.~\ref{fig:heatmap} reports the percentage of test cases passing each evaluation metric.
\label{app:heatmap}
\begin{figure}[h]
    \centering
    \includegraphics[width=1\linewidth]{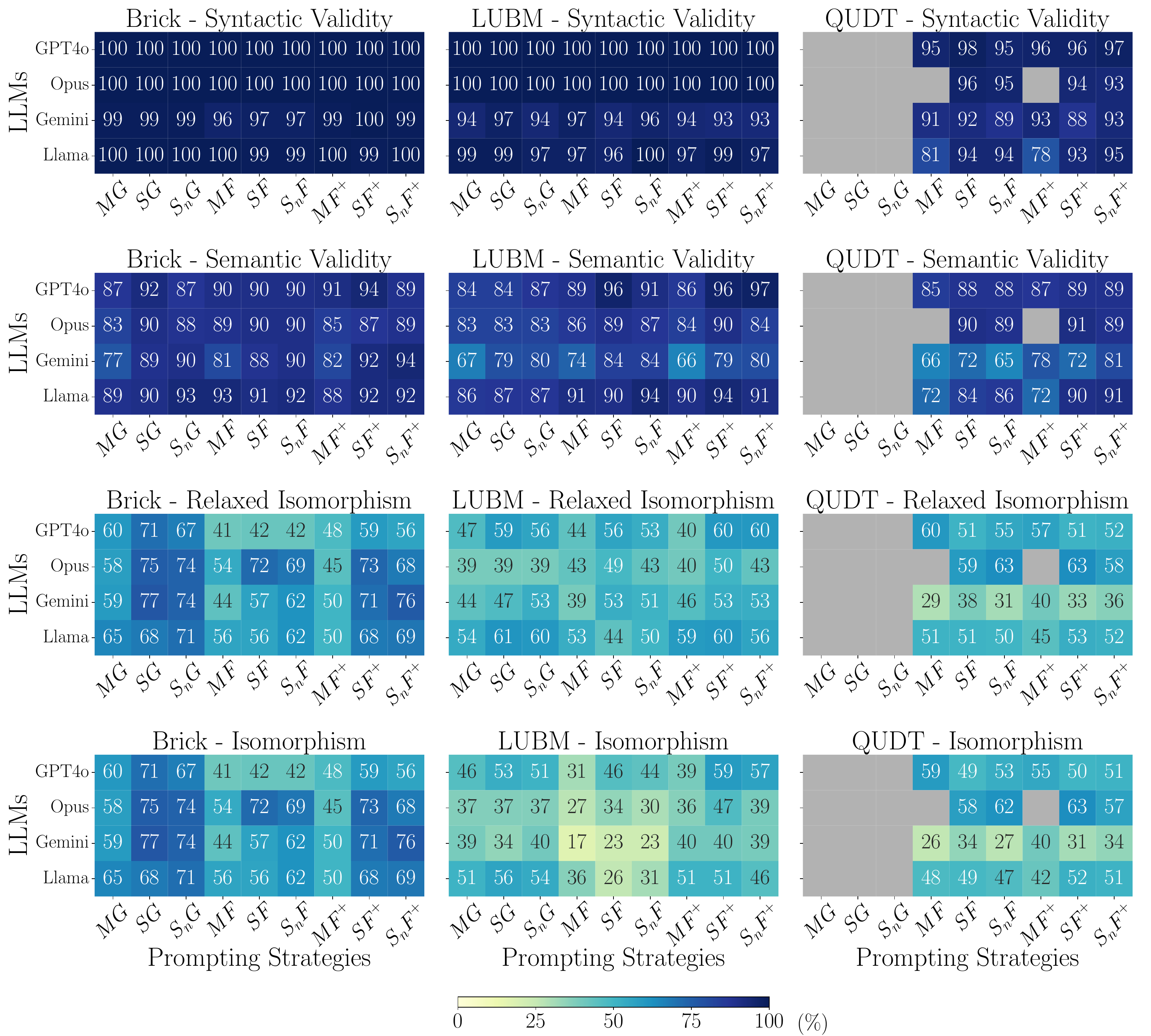}
    \caption{\textbf{Performance of each evaluation metric for every combination of LLM, strategy, and dataset.}}
    \label{fig:heatmap}
    \vspace{-1em}
\end{figure}

\section{Details on Statistical Analysis}\label{app:stats}
In our statistical analysis in Sec.~\ref{sec:results}, we seek to determine whether using different contexts or LLMs produces different results. In the following, we take the example of investigating whether using the manifest context $M$ differs from $S$ to illustrate.

Our data are structured such that multiple measurements are taken across various datasets, LLMs, and evaluation metrics. This leads to repeated measures and correlated observations within each grouping factor (e.g., measurements from the same LLM are not independent). Standard linear models assume that all observations are independent~\cite{zdaniuk2024ordinary}, which is violated in this setting. Linear mixed-effects models (LMMs) are designed to address this issue: they allow us to model both the systematic effects of our main variables of interest (fixed effects) and the random variability due to repeated measurements within groups (random effects). This provides more accurate estimates and valid statistical inference in the presence of hierarchical data.

A linear mixed-effects model combines two types of effects:
\begin{enumerate}[noitemsep, topsep=0pt, left=0pt, align=left]
    \item Fixed effects capture the average influence of variables of primary interest that are consistent across all observations. In our example, the fixed effect is the manifest context $M$ vs. $S$, where $M$ is the reference and $S$ is the comparison.
    \item Random effects account for random variation attributable to grouping factors, such as datasets, LLMs, and metrics. These effects capture the correlation and heterogeneity within groups.
\end{enumerate}

Mathematically, for an outcome $y_{ijkl}$ measured for manifest context $i$, dataset $j$, LLM $k$, and metric $l$, the model can be written as:

\[
y_{ijkl} = \beta_0 + \beta_1 \cdot \text{Context}_{i} + b_j^{(\text{dataset})} + b_k^{(\text{LLM})} + b_l^{(\text{metric})} + \epsilon_{ijkl}, 
\]
where
\begin{itemize}[noitemsep, topsep=0pt, left=0pt, align=left]
    \item $\beta_0$ is the intercept (mean outcome for the reference. In our example, $M$ is considered the reference that $S$ will compare with),
    \item $\beta_1$ is the fixed-effect coefficient for the context (difference between $M$ and $S$),
    \item $b_j^{(\text{dataset})} \sim \mathcal{N}(0, \sigma^2_{\text{dataset}})$ is the random intercept for dataset $j$,
    \item $b_k^{(\text{LLM})} \sim \mathcal{N}(0, \sigma^2_{\text{LLM}})$ is the random intercept for LLM $k$,
    \item $b_l^{(\text{metric})} \sim \mathcal{N}(0, \sigma^2_{\text{metric}})$ is the random intercept for metric $l$,
    \item $\epsilon_{ijkl} \sim \mathcal{N}(0, \sigma^2)$ is the residual error.
\end{itemize}
This formulation enables us to estimate the effect of changing the manifest context from $M$ to $S$ while controlling for the repeated measures and correlation introduced by datasets, LLMs, and metrics.

The effect of changing the manifest context from $M$ to $S$ is given by the estimated fixed-effect coefficient, $\hat{\beta}_1$. This value represents the average difference in the outcome variable when using $S$ instead of $M$, after accounting for variability due to datasets, LLMs, and metrics. The estimation of the confidence interval for this effect quantifies the uncertainty of the estimate. Its calculation can be found in Pinheiro et al.~\cite{pinheiro2000linear}.

A key assumption of linear mixed-effects models is that the residual errors ($\epsilon_{ijkl}$) are normally distributed with mean zero and constant variance. This normality assumption underpins the validity of statistical inference (such as confidence intervals and $p$-values). To test for normality, we employ the Shapiro-Wilk (SW) test. The null hypothesis of the SW test assumes that the data follows a normal distribution. We set the significance level for the SW test at $0.05$. Therefore, if the $p$-value is greater than $0.05$, the residuals are considered to be normally distributed, satisfying the normality assumption of the LMM. If the residuals do not pass the normality test, we apply various transformations to the raw data and refit the LMM. The transformations considered are:

\begin{enumerate}
    \item Arcsine Square Root Transformation: $\arcsin{\sqrt{x}}$.
    \item Logit Transformation: $\ln{\frac{x+\epsilon}{1-p+\epsilon}}$, where $\epsilon=10^{-6}$ is used to avoid numerical error.
    \item Box-Cox Transformation: $\begin{cases}
            \frac{y^\lambda -1}{y}, \text{if }\lambda\neq0\\
            \ln{y},\text{  else}
        \end{cases}$. Here, $\lambda$ is optimized to best approximate a normal distribution.
\end{enumerate}
If none of the transformations results in normal residuals after refitting the model, we proceed without any transformation and report the SW $p$-values directly. For any transformation applied, we back-transform the estimated effects to their original scale.

In Tables \ref{tab:prompt_selection} and \ref{tab:llm_selection}, we present the raw averages of the reference group and the comparison group. The critical value for both the estimated effect and the Shapiro-Wilk test is set to $0.05$. If the $p$-value of the estimated effect is less than $0.05$, it indicates a significant difference between the reference group and the comparison group; otherwise, the cell in the table is shaded with \textcolor{C6!20}{\rule{3ex}{1.8ex}}. Similarly, if the $p$-value of the SW test is greater than $0.05$, indicating that the residuals are normally distributed, the cell is not shaded; otherwise, it is shaded with \textcolor{C3!20}{\rule{3ex}{1.8ex}}.

\begin{sidewaystable}[t]
\centering
\caption{\textbf{Context Selection.} \textcolor{C6!20}{\rule{3ex}{1.8ex}} indicates the averages of the reference and comparison group are not considered different. \textcolor{C3!20}{\rule{3ex}{1.8ex}} indicates the residuals are not considered normal.}\label{tab:prompt_selection}
\scriptsize
\begin{tabular}{lcccccccccccc}\toprule
&Treatment&\shortstack{Reference\\Group} &\shortstack{Comparison\\Group} &\shortstack{Reference\\Group Avg} &\shortstack{Comparison\\Group Avg} &\shortstack{Estimated\\Effect}&$p$-value  &\shortstack{Transf-\\ormation} &\shortstack{SW test\\$p$-value} \\\midrule
\multirow{6}{*}{Overall} 
&$M\to S$&$MG$, $MF$, $MF^+$ &$SG$, $SF$, $SF^+$ &68.31\% &73.71\% &5.28\% &$4.4\times10^{-10}$  &- &$1.8\times10^{-1}$  \\
&$S\to S_n$&$SG$, $SF$, $SF^+$ &$S_nG$, $S_nF$, $S_nF^+$ &73.71\% &73.41\% &-0.30\% &\cellcolor{C6!20}$6.7\times10^{-1}$  &- &\cellcolor{C3!20}$3.8\times10^{-7}$  \\
&$M\to S_n$&$MG$, $MF$, $MF^+$ &$S_nG$, $S_nF$, $S_nF^+$ &68.31\% &73.41\% &4.98\% &$4.8\times 10^{-9}$  &arcsine &$5.0\times10^{-2}$  \\
&$G\to F$&$MG$, $SG$, $S_nG$ &$MF$, $SF$, $S_nF$ &74.93\% &69.37\% &-3.97\% &$2.4\times 10^{-5}$  &arcsine &$5.6\times10^{-2}$  \\
&$F\to F^+$&$MF$, $SF$, $S_nF$ &$MF^+$, $SF^+$, $S_nF^+$ &69.37\% &72.31\% &2.94\% &$1.5\times 10^{-4}$  &- &$8.4\times10^{-2}$  \\
&$G\to F^+$&$MG$, $SG$, $S_nG$ &$MF^+$, $SF^+$, $S_nF^+$ &74.93\% &72.31\% &-0.86\% &\cellcolor{C6!20}$3.3\times 10^{-1}$  &- &\cellcolor{C3!20}$4.4\times10^{-6}$  \\\hline
\multirow{6}{*}{\shortstack{Syntactic\\Validity}} 
&$M\to S$&$MG$, $MF$, $MF^+$ &$SG$, $SF$, $SF^+$ &96.87\% &97.59\% &0.16\% &$4.5\times 10^{-1}$  &box-cox &$6.3\times10^{-2}$  \\
&$S\to S_n$&$SG$, $SF$, $SF^+$ &$S_nG$, $S_nF$, $S_nF^+$ &97.59\% &97.56\% &-0.03\% &\cellcolor{C6!20}$9.1\times 10^{-1}$  &- &\cellcolor{C3!20}$1.5\times10^{-2}$  \\
&$M\to S_n$&$MG$, $MF$, $MF^+$ &$S_nG$, $S_nF$, $S_nF^+$ &96.87\% &97.56\% &0.96\% &\cellcolor{C6!20}$1.1\times 10^{-1}$  &- &\cellcolor{C3!20}$3.5\times10^{-8}$  \\
&$G\to F$&$MG$, $SG$, $S_nG$ &$MF$, $SF$, $S_nF$ &99.04\% &96.80\% &-0.42\% &\cellcolor{C6!20}$4.3\times 10^{-1}$  &- &\cellcolor{C3!20}$2.2\times10^{-9}$  \\
&$F\to F^+$&$MF$, $SF$, $S_nF$ &$MF^+$, $SF^+$, $S_nF^+$ &96.80\% &96.74\% &-0.06\% &\cellcolor{C6!20}$9.2\times 10^{-1}$  &- &\cellcolor{C3!20}$1.0\times10^{-9}$  \\
&$G\to F^+$&$MG$, $SG$, $S_nG$ &$MF^+$, $SF^+$, $S_nF^+$ &99.04\% &96.74\% &-0.35\% &\cellcolor{C6!20}$5.7\times 10^{-1}$  &- &\cellcolor{C3!20}$2.0\times10^{-11}$  \\\hline
\multirow{6}{*}{\shortstack{Semantic\\Validity}} 
&$M\to S$&$MG$, $MF$, $MF^+$ &$SG$, $SF$, $SF^+$ &82.70\% &87.87\% &5.04\% &$2.4\times 10^{-7}$  &- &$5.8\times10^{-1}$  \\
&$S\to S_n$&$SG$, $SF$, $SF^+$ &$S_nG$, $S_nF$, $S_nF^+$ &87.87\% &87.72\% &-0.16\% &\cellcolor{C6!20}$8.4\times 10^{-1}$  &- &$2.8\times10^{-1}$  \\
&$M\to S_n$&$MG$, $MF$, $MF^+$ &$S_nG$, $S_nF$, $S_nF^+$ &82.70\% &87.72\% &4.97\% &$4.0\times10^{-6}$  &- &$3.6\times10^{-1}$  \\
&$G\to F$&$MG$, $SG$, $S_nG$ &$MF$, $SF$, $S_nF$ &85.21\% &86.11\% &3.59\% &$5.8\times10^{-4}$  &arcsine &$9.8\times10^{-2}$  \\
&$F\to F^+$&$MF$, $SF$, $S_nF$ &$MF^+$, $SF^+$, $S_nF^+$ &86.11\% &86.89\% &0.90\% &\cellcolor{C6!20}$3.4\times10^{-1}$  &logit &$5.7\times10^{-1}$  \\
&$G\to F^+$&$MG$, $SG$, $S_nG$ &$MF^+$, $SF^+$, $S_nF^+$ &85.21\% &86.89\% &3.41\% &$1.3\times10^{-2}$  &logit &$6.2\times10^{-1}$  \\\hline
\multirow{6}{*}{\shortstack{Relaxed\\Isomorphism}} 
&$M\to S$&$MG$, $MF$, $MF^+$ &$SG$, $SF$, $SF^+$ &48.67\% &56.84\% &8.11\% &$1.5\times10^{-5}$  &- &$8.3\times10^{-1}$  \\
&$S\to S_n$&$SG$, $SF$, $SF^+$ &$S_nG$, $S_nF$, $S_nF^+$ &56.84\% &56.38\% &-0.47\% &\cellcolor{C6!20}$7.5\times 10^{-1}$  &- &\cellcolor{C3!20}$1.3\times10^{-2}$  \\
&$M\to S_n$&$MG$, $MF$, $MF^+$ &$S_nG$, $S_nF$, $S_nF^+$ &48.67\% &56.38\% &7.58\% &$8.0\times 10^{-6}$  &- &$9.0\times10^{-1}$  \\
&$G\to F$&$MG$, $SG$, $S_nG$ &$MF$, $SF$, $S_nF$ &59.04\% &50.66\% &-7.64\% &$1.6\times10^{-4}$  &- &$2.1\times10^{-1}$  \\
&$F\to F^+$&$MF$, $SF$, $S_nF$ &$MF^+$, $SF^+$, $S_nF^+$ &50.66\% &54.09\% &3.43\% &$3.9\times 10^{-2}$  &- &$6.1\times10^{-1}$  \\
&$G\to F^+$&$MG$, $SG$, $S_nG$ &$MF^+$, $SF^+$, $S_nF^+$ &59.04\% &54.09\% &-3.03\% &\cellcolor{C6!20}$1.8\times 10^{-1}$  &logit &$7.5\times10^{-2}$  \\\hline
\multirow{6}{*}{Isomorphism} 
&$M\to S$&$MG$, $MF$, $MF^+$ &$SG$, $SF$, $SF^+$ &45.00\% &52.53\% &7.34\% &$1.4\times 10^{-3}$  &- &$1.4\times10^{-1}$  \\
&$S\to S_n$&$SG$, $SF$, $SF^+$ &$S_nG$, $S_nF$, $S_nF^+$ &52.53\% &51.97\% &-0.56\% &\cellcolor{C6!20}$7.6\times 10^{-1}$  &- &$6.4\times10^{-2}$  \\
&$M\to S_n$&$MG$, $MF$, $MF^+$ &$S_nG$, $S_nF$, $S_nF^+$ &45.00\% &51.97\% &7.28\% &$1.7\times 10^{-3}$  &logit &$6.9\times10^{-2}$  \\
&$G\to F$&$MG$, $SG$, $S_nG$ &$MF$, $SF$, $S_nF$ &56.42\% &43.91\% &-13.67\% &$3.4\times 10^{-14}$  &- &$7.8\times10^{-1}$  \\
&$F\to F^+$&$MF$, $SF$, $S_nF$ &$MF^+$, $SF^+$, $S_nF^+$ &43.91\% &51.51\% &7.60\% &$2.2\times10^{-5}$  &- &$1.7\times10^{-1}$  \\
&$G\to F^+$&$MG$, $SG$, $S_nG$ &$MF^+$, $SF^+$, $S_nF^+$ &56.42\% &51.51\% &-3.27\% &\cellcolor{C6!20}$1.2\times 10^{-1}$  &- &$1.3\times10^{-1}$  \\
\bottomrule
\end{tabular}
\end{sidewaystable}

\begin{sidewaystable}[t]
\centering
\caption{\textbf{LLM Selection} \textcolor{C6!20}{\rule{3ex}{1.8ex}} indicates the averages of the reference and comparison group are not considered different. \textcolor{C3!20}{\rule{3ex}{1.8ex}} indicates the residuals are not considered normal.}\label{tab:llm_selection}
\scriptsize
\begin{tabular}{lcccccccccc}\toprule
&Treatment &\shortstack{Reference\\Group Avg} &\shortstack{Comparison\\Group Avg} &\shortstack{Estimated\\Effect}&$p$-value  &\shortstack{Transf-\\ormation} &\shortstack{SW test\\$p$-value} \\\midrule
\multirow{6}{*}{Overall} 
&GPT4o $\to$Llama 3.1 405B &73.32\% &73.52\% &0.20\% &\cellcolor{C6!20}$8.1\times 10^{-1}$ &- &\cellcolor{C3!20}$2.9\times 10^{-6}$  \\
&GPT4o $\to$Claude 3.0 Opus &73.32\% &73.47\% &-0.21\% &\cellcolor{C6!20}$2.9\times 10^{-1}$ &logit &$5.7\times 10^{-2}$  \\
&GPT4o $\to$Gemini 1.5 Pro &73.32\% &67.35\% &-8.60\% &$1.0\times 10^{-15}$  &arcsine &$1.8\times 10^{-1}$  \\
&Llama 3.1 405B $\to$Claude 3.0 Opus &73.52\% &73.47\% &-0.89\% &\cellcolor{C6!20}$2.9\times 10^{-1}$ &- &$7.3\times 10^{-1}$  \\
&Llama 3.1 405B $\to$Gemini 1.5 Pro &73.52\% &67.35\% &-6.17\% &$7.0\times 10^{-12}$  &- &$5.5\times 10^{-1}$  \\
&Claude 3.0 Opus $\to$Gemini 1.5 Pro &73.47\% &67.35\% &-5.43\% &$8.6\times 10^{-8}$ &- &$5.7\times 10^{-1}$  \\\hline
\multirow{6}{*}{\shortstack{Syntactic\\Validity}} 
%GPT4o	Claude 3.0 Opus	99.04%	99.00%	-0.35%	8.02E-02	Not Significant	null	9.2E-08	FALSE
%GPT4o	Gemini 1.5 Pro	99.04%	95.12%	-3.92%	2.07E-19	Significant	null	8.9E-02	TRUE
%Claude 3.0 Opus	Gemini 1.5 Pro	99.00%	95.12%	-3.51%	3.20E-15	Significant	null	2.8E-02	FALSE
&GPT4o $\to$Llama 3.1 405B &99.04\% &96.38\% &-1.33\% &$3.2\times 10^{-9}$ &box-cox &$7.2\times 10^{-2}$  \\
&GPT4o $\to$Claude 3.0 Opus &99.04\% &99.00\% &-0.35\% &\cellcolor{C6!20}$8.0\times 10^{-2}$ &- &\cellcolor{C3!20}$9.2\times 10^{-8}$  \\
&GPT4o $\to$Gemini 1.5 Pro &99.04\% &95.12\% &-3.92\% &$2.1\times 10^{-19}$  &- &$8.9\times 10^{-2}$  \\
&Llama 3.1 405B $\to$Claude 3.0 Opus &96.38\% &99.00\% &1.16\% &$1.5\times 10^{-4}$ &- &$4.1\times 10^{-1}$  \\
&Llama 3.1 405B $\to$Gemini 1.5 Pro &96.38\% &95.12\% &-1.95\% &$2.6\times 10^{-5}$  &box-cox &$4.0\times 10^{-1}$  \\
&Claude 3.0 Opus $\to$Gemini 1.5 Pro &99.00\% &95.12\% &-3.51\% &$3.2\times 10^{-15}$  &- &\cellcolor{C3!20}$2.8\times 10^{-2}$  \\\hline
\multirow{6}{*}{\shortstack{Semantic\\Validity}} &GPT4o $\to$Llama 3.1 405B &89.42\% &88.54\% &-0.77\% &\cellcolor{C6!20}$4.3\times 10^{-1}$ &arcsine &$1.7\times 10^{-1}$  \\
&GPT4o $\to$Claude 3.0 Opus &89.42\% &87.23\% &-2.30\% &$2.9\times 10^{-3}$ &- &$1.4\times 10^{-1}$  \\
&GPT4o $\to$Gemini 1.5 Pro &89.42\% &79.58\% &-9.83\% &$2.4\times 10^{-10}$  &- &$7.5\times 10^{-1}$  \\
&Llama 3.1 405B $\to$Claude 3.0 Opus &88.54\% &87.23\% &-2.53\% &$4.8\times 10^{-4}$ &- &$9.6\times 10^{-1}$  \\
&Llama 3.1 405B $\to$Gemini 1.5 Pro &88.54\% &79.58\% &-8.96\% &$7.6\times 10^{-9}$ &- &$1.9\times 10^{-1}$  \\
&Claude 3.0 Opus $\to$Gemini 1.5 Pro &87.23\% &79.58\% &-7.40\% &$1.2\times 10^{-5}$ &- &$2.1\times 10^{-1}$  \\\hline
\multirow{6}{*}{\shortstack{Relaxed\\Isomorphism}} &GPT4o $\to$Llama 3.1 405B &53.63\% &56.83\% &3.21\% &\cellcolor{C6!20}$7.7\times 10^{-2}$ &- &$6.5\times 10^{-2}$  \\
&GPT4o $\to$Claude 3.0 Opus &53.63\% &55.27\% &1.84\% &\cellcolor{C6!20}$5.0\times 10^{-1}$ &- &$1.1\times 10^{-1}$  \\
&GPT4o $\to$Gemini 1.5 Pro &53.63\% &50.67\% &-2.96\% &\cellcolor{C6!20}$2.7\times 10^{-1}$ &- &$5.2\times 10^{-1}$  \\
&Llama 3.1 405B $\to$Claude 3.0 Opus &56.83\% &55.27\% &-1.71\% &\cellcolor{C6!20}$4.7\times 10^{-1}$ &- &$5.1\times 10^{-2}$  \\
&Llama 3.1 405B $\to$Gemini 1.5 Pro &56.83\% &50.67\% &-6.17\% &$1.1\times 10^{-3}$ &- &$6.3\times 10^{-1}$  \\
&Claude 3.0 Opus $\to$Gemini 1.5 Pro &55.27\% &50.67\% &-3.89\% &\cellcolor{C6!20}$1.6\times 10^{-1}$ &- &$1.8\times 10^{-1}$  \\\hline
\multirow{6}{*}{Isomorphism} 
&GPT4o $\to$Llama 3.1 405B &51.21\% &52.33\% &1.26\% &\cellcolor{C6!20}$5.4\times 10^{-1}$ &box-cox &$5.4\times 10^{-2}$  \\
&GPT4o $\to$Claude 3.0 Opus &51.21\% &52.36\% &1.50\% &\cellcolor{C6!20}$5.9\times 10^{-1}$ &- &$2.7\times 10^{-1}$  \\
&GPT4o $\to$Gemini 1.5 Pro &51.21\% &44.04\% &-7.17\% &$2.1\times 10^{-2}$ &- &$4.8\times 10^{-1}$  \\
&Llama 3.1 405B $\to$Claude 3.0 Opus &52.33\% &52.36\% &-0.19\% &\cellcolor{C6!20}$9.3\times 10^{-1}$ &- &$7.4\times 10^{-2}$  \\
&Llama 3.1 405B $\to$Gemini 1.5 Pro &52.33\% &44.04\% &-8.29\% &$1.9\times 10^{-5}$ &- &$4.2\times 10^{-1}$  \\
&Claude 3.0 Opus $\to$Gemini 1.5 Pro &52.36\% &44.04\% &-7.75\% &$2.2\times 10^{-3}$ &- &$6.7\times 10^{-1}$  \\
\bottomrule
\end{tabular}
\end{sidewaystable}

\end{document}